\def\arxiv{}
\gdef\@copyrightpermission{
  \begin{minipage}{0.2\columnwidth}
   \href{https://creativecommons.org/licenses/by/4.0/}{\includegraphics[width=0.90\textwidth]{by}}
  \end{minipage}\hfill
  \begin{minipage}{0.8\columnwidth}
   \href{https://creativecommons.org/licenses/by/4.0/}{This work is licensed under a Creative Commons Attribution International 4.0 License.}
  \end{minipage}
  \vspace{5pt}
}
\newcommand{\s}{\mathbf{s}}
\newcommand{\x}{\mathbf{x}}
\renewcommand{\t}{\mathbf{t}}
\newcommand{\A}{\mathcal{A}}
\newcommand{\B}{\mathcal{B}}
\newcommand{\M}{\mathcal{M}}
\renewcommand{\L}{\mathcal{L}}
\definecolor{cred}{HTML}{D81B60}
\definecolor{cblue}{HTML}{1E88E5}
\definecolor{cyellow}{HTML}{D09C00}
\definecolor{cgreen}{HTML}{5B8600}
\definecolor{cgray}{HTML}{AAAAAA}
\newcommand*{\trlength}{1}
\newcommand*{\locsepdist}{0.5*\trlength}
\newcommand*{\noderadius}{7.5pt}
\tikzset{
node distance={\trlength cm}, vert/.style = {draw, circle, inner sep = 0 pt, minimum size = 2 * \noderadius, text depth=0.0ex, text height=1.25ex}, every path/.style = {-Latex}, every label/.append style={rectangle, font = {\footnotesize}}, miller/.style = {circle,fill,inner sep=2.5pt, cred}, baker/.style = {rectangle,fill,inner sep=2.5pt, cblue}, every node/.style = {font = {\footnotesize}}, every edge quotes/.style = {auto, sloped, font = {\footnotesize}}, unchosen/.style = {-, dashed}
}
\title{The Bakers and Millers Game with Restricted Locations}
\keywords{Competitive Location Choice; Fractional Hedonic Games; Nash Equilibrium; Price of Anarchy}
\author{\href{https://orcid.org/0000-0001-6577-6756}{Simon Krogmann}}
\affiliation{
  \institution{Hasso Plattner Institute,\\University of Potsdam}
  \city{Potsdam}
  \country{Germany}}
\email{simon.krogmann@hpi.de}
\author{\href{https://orcid.org/0000-0002-3010-1019}{Pascal Lenzner}}
\affiliation{
  \institution{Institute of Computer Science, University of Augsburg}
  \city{Augsburg}
  \country{Germany}}
\email{pascal.lenzner@uni-a.de}
\author{\href{https://orcid.org/0000-0002-4950-8708}{Alexander Skopalik}}
\affiliation{
  \institution{Mathematics of Operations Research, University of Twente}
  \city{Enschede}
  \country{The Netherlands}}
\email{a.skopalik@utwente.nl}
\begin{abstract}
We study strategic location choice by customers and sellers, termed the Bakers and Millers Game in the literature. In our generalized setting, each miller can freely choose any location for setting up a mill, while each baker is restricted in the choice of location for setting up a bakery.
For optimal bargaining power, a baker would like to select a location with many millers to buy flour from and with little competition from other bakers. Likewise, a miller aims for a location with many bakers and few competing millers. Thus, both types of agents choose locations to optimize the ratio of agents of opposite type divided by agents of the same type at their chosen location. Originally raised in the context of Fractional Hedonic Games, the Bakers and Millers Game has applications that range from commerce to product design.

We study the impact of location restrictions on the properties of the game.
While pure Nash equilibria trivially exist in the setting without location restrictions, we show via a sophisticated, efficient algorithm that even the more challenging restricted setting admits equilibria. Moreover, the computed equilibrium approximates the optimal social welfare by a factor of at most $2\left(\frac{e}{e-1}\right)$. Furthermore, we give tight bounds on the price of anarchy/stability.

On the conceptual side, the location choice feature adds a new layer to Hedonic Games, in the sense that agents that select the same location form a coalition. This allows to naturally restrict the possible coalitions that can be formed. With this, our model generalizes simple symmetric Fractional Hedonic Games on complete bipartite valuation graphs and also Hedonic Diversity Games with utilities single-peaked at 0. We believe that this generalization is also a very interesting direction for other types of Hedonic Games.
\end{abstract}
\begin{document}
\pagestyle{fancy}
\fancyhead{}

\maketitle


\section{Introduction}
Markets facilitate trade. They are the key places where supply meets demand. Typically, there are many possible markets for goods or services, and thus sellers face the economic decision of which markets to serve, while customers strategically decide which market to patronize. The choice of the right location for trade is crucial in a competitive economic environment both for sellers and for customers. The number of competitors in a given market is a key factor in determining supply and demand, and eventually, prices and profits from trade.
For achieving high prices, sellers want to minimize competition with other sellers and at the same time, they want to maximize the number of customers. For customers who aim for low prices, it is important to patronize a market with many sellers to choose from and with low competition from other customers. 

\begin{figure}[b]
    \centering
    \begin{tikzpicture}
        \node at (4, 0) {locations};
        \node at (4, -\trlength) {millers};
        \node at (4, \trlength) {bakers};
        
        \draw[dotted,-] (-0.2,\locsepdist) -- (5+3.2,\locsepdist);
        \draw[dotted,-] (-0.2,-\locsepdist) -- (5+3.2,-\locsepdist);
        
        \foreach \offset in {0,5} {
            \node[vert] (x\offset) at (\offset + 0.5, 0) {$x$};
            \node[vert] (y\offset) at (\offset + 1.5, 0) {$y$};
            \node[vert] (z\offset) at (\offset + 2.5, 0) {$z$};
            
            \node[miller] [label=left:$m$] (m0\offset) at (\offset + 1, -\trlength) {};
            \node[miller] (m1\offset) at (\offset + 2, -\trlength) {};

            \foreach \i in {0,1,2,3} {
                \node[baker] (b\i\offset) at (\offset + 1*\i, \trlength) {};
            }
        }
        \path[unchosen] (b00) edge (x0);
        \path[unchosen] (b10) edge (x0);
        \path[unchosen] (b20) edge (z0);
        \path (b00) edge (y0);
        \path (b10) edge (y0);
        \path (b20) edge (y0);
        \path (b30) edge (z0);

        \path (m00) edge (x0);
        \path (m10) edge (y0);
        
        \path[unchosen] (b05) edge (x5);
        \path[unchosen] (b15) edge (x5);
        \path[unchosen] (b25) edge (y5);
        \path (b05) edge (y5);
        \path (b15) edge (y5);
        \path (b25) edge (z5);
        \path (b35) edge (z5);

        \path (m05) edge (y5);
        \path (m15) edge (z5);
    \end{tikzpicture}
    \caption{Illustration of the Bakers and Millers Game with restricted locations. Bakers choose between an agent-specific set of locations (black arrows mark the selected location, dashed lines the unselected ones), while millers can choose any location.
    On the left, miller $m$ can improve her utility by choosing location~$y$ for utility $\frac32$ or location~$z$ for utility $1$.
    A pure Nash equilibrium is given on the right.}
    \label{fig:example-bakers}
\end{figure}
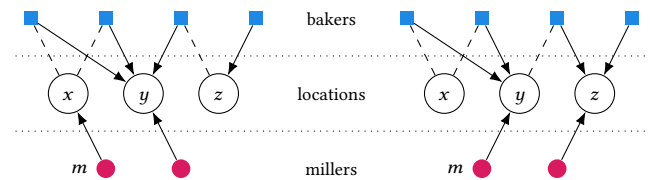

Our running example of the above competitive setting will be the Bakers and Millers Game, as briefly introduced by \citet*{bakermiller_conf}. There, millers choose locations to set up their mill to produce flour whereas bakers depend on buying their flour from local millers. In this scenario, millers aim for locations with many bakers as customers and only a few other competing millers. Bakers will patronize a local miller that has as few competing customers as possible. If both bakers and millers can freely choose any location, then this setting is well-studied, since it corresponds to a simple symmetric Fractional Hedonic Game on a complete bipartite valuation graph~\cite{bakermiller_conf,bakermiller}. However, in many practical applications, there is an asymmetry between sellers and customers, i.e., sellers can freely choose where to set up their stores, but customers can only patronize a restricted set of locations, e.g., their local neighborhood.
In this paper, we study the impact of such an asymmetry of location choice by investigating the Bakers and Millers Game where one side has restricted locations. In our setting, the millers can freely choose their location, while the bakers can only choose from an agent-specific restricted set of locations. See \Cref{fig:example-bakers} for an illustration. With this, we investigate a very simple model that naturally extends simple symmetric Fractional Hedonic Games on complete bipartite valuation graphs.

More examples are classical farmer's markets but also online commerce on different platforms like Alibaba, Amazon, eBay or Zalando. Another application is gaining a favorable position in a two-sided matching market:
A worker wants to move to a city that has more companies and thus more possible jobs to choose from, whereas a company wants to move to a city which has fewer companies that compete for the local workforce. However, location choice is not restricted to trade. "Sellers" could be product designers, politicians, researchers, or writers who strategically focus on certain topics, and the "customers" might be consumers, voters, reviewers or readers only interested in some of these topics. Thus, "locations" might simply be points in some abstract feature space.

Given the importance of strategic location choice, the research on such problems dates back roughly a century, starting with Hotelling's seminal model of several sellers competitively selecting a location for their store on the city's main street~\cite{hotelling} that was later adopted by Downs for modeling political competition~\cite{downs}. Based on this, many other location choice models have been investigated in Economics, Mathematics, Computer Science and Operations Research and this field is typically called Location Analysis~\cite{RE05,ELT93}. However, most of these models solely focus on the seller's side, i.e., on locating stores in settings where the customers do not face strategic decisions since they simply patronize their nearest store.
Also, the opposite point of view has been taken, where the sellers have fixed positions and the focus is on the customer behavior, like in consumer behavior theory~\cite{blundell1988consumer,schiffman2013consumer}.

The setting of rational agents who compete over a fixed set of offered services, products or resources is covered by the broad area of Congestion Games~\cite{Rosenthal1973,MS96}. There, agents strategically select resources and an agent's utility for a given resource deteriorates with the number of other agents who also choose the same resource. Thus, in the Bakers and Millers Game, both bakers and millers face a Singleton Congestion Game if the behavior of the other type of agents is fixed, i.e., given the location choice of the millers, the bakers want to select locations with a minimum number of other bakers and a maximum number of millers and vice versa. Hence, our game combines both the sellers' and the customers' perspectives simultaneously, i.e., both sides strategically select a location which then determines which customers are matched to which sellers. Our game focuses on the impact of the location choice and it abstracts away from the exact mechanisms that determine prices via supply and demand. 
This allows for modeling a broad range of applications ranging from commerce, over two-sided matching, to the design of products or political campaigns.

\subsection{Our Contribution}
We study competitive location choice in the Bakers and Millers Game introduced by \citet*{bakermiller_conf}. Most importantly, we generalize this setting by considering that one side, the bakers, may have location restrictions, i.e., they can only choose their location from an individually predetermined subset of possible locations. This naturally models an asymmetry in many competitive settings where sellers or producers can freely choose where to open stores or facilities while the customers are bound to patronize stores accessible to them only, e.g., local stores in their vicinity. Besides store locations, this can also model product or feature choices by producers in a setting where the customers are only interested in certain subsets of features or products, like family-friendly cars, lightweight e-bikes, or regional vegan food.

The Bakers and Millers Game with restricted locations generalizes simple symmetric Fractional Hedonic Games on complete bipartite valuation graphs by naturally limiting the possible coalitions that can be formed. The same holds for Hedonic Diversity Games with utilities that have a single peak at 0. As these two examples indicate, we emphasize that this extra layer of first selecting locations that then determine the formed coalitions can be added to any variant of Hedonic Games to better model real-world settings.

As our main contribution, we analyze the impact of the location restrictions in the Bakers and Millers Game. While proving the existence of pure Nash equilibria is trivial if both bakers and millers can freely choose their locations, this task is much more challenging if the bakers have restricted locations. Despite this, we show the existence of pure Nash equilibria by giving a polynomial-time algorithm to compute one. As a consequence of the location restrictions, some bakers might end up without an available miller. Hence, we use the \emph{coverage of bakers}, i.e., the number of bakers with at least one available miller, as our social welfare function. For this function, our aforementioned efficient algorithm computes a pure Nash equilibrium that approximates the optimal social welfare by $\left(1 + \frac{\min(|\L|,|\M|)-1}{|\M|}\right)\frac{e}{e-1} < 2\left(\frac{e}{e-1}\right)$ with Euler's number $e$ and the sets of locations $|\L|$ and millers $|\M|$.
On the other hand, we show that computing a strategy profile or even a pure Nash equilibrium with optimal social welfare is NP-hard.
Finally, we establish tight bounds of $1 + \frac{\min(|\L|,|\M|)-1}{|\M|} < 2$ on the price of stability and of $|\B|$ on the price of anarchy, with the set of bakers $\B$.

\subsection{Preliminaries and Model}

\paragraph{\bf Basics.}
We denote the set of available locations by $\L$ and the set of agents by $\A \coloneq \B \cup \M$, with the set of bakers $\B$ and millers $\M$.
Each baker $b\in \B$ has a set of feasible locations $L(b) \subseteq \L$.

A strategy profile is given by the tuple $(\s, \t)$, where $\s$ is a vector of bakers' locations $(s_b)_{b \in \B}$, with $s_b \in L(b)$, and $\t$ is a vector of millers' locations $(t_m)_{m \in \M}$, with $t_m \in L$.
We denote by $B_\s(\ell) \coloneq |\{b \in \B \mid s_b = \ell\}|$ the number of bakers at location $\ell$ for a given strategy profile $\s$.
Similarly, $M_\t(\ell) \coloneq |\{m \in \M \mid t_m = \ell\}|$ denotes the number of millers at location $\ell$.

\paragraph{\bf Objectives.}
Agents choose locations to avoid competition due to agents of the same type while maximizing the availability of agents of the other type.
In its simplest form, this notion is captured by a utility function that corresponds to the ratio of the two types of agents.
Hence, the (always well-defined) utility of a baker $b$ and a miller $m$ in a given strategy profile $(\s, \t)$ is
\[
    u_b(\s, \t) \coloneq \frac{M_\t(s_b)}{B_\s(s_b)}
\quad\text{and}\quad
    u_m(\s, \t) \coloneq \frac{B_\s(t_m)}{M_\t(t_m)} \text{, respectively.}
\]

\paragraph{\bf Equilibria.}
A state $(\s^*, \t^*)$ is a \emph{miller equilibrium} if for each miller $m \in \M$ there is no location $t_m'\in \L$ such that
\[
u_m(\s^*, (t_m', \t^*_{-m})) > u_m(\s^*, \t^*)\text.
\]
A state $(\s^*, \t^*)$ is a \emph{baker equilibrium} if for each baker $b \in \B$ there is no location $s_b' \in L(b)$ such that
\[
u_b( (s_b', \s^*_{-b}), \t^*) > u_b(\s^*, \t^*)\text.
\]
A state $(\s^*, \t^*)$ is a pure Nash equilibrium if it is both a baker equilibrium and a miller equilibrium.

\paragraph{\bf Efficiency.}
We measure the efficiency of equilibria by the number of bakers that can shop at some miller.
Therefore, we define the \emph{coverage} of a strategy profile $(\s,\t)$ as the number of bakers with at least one miller at the same location, i.e.,
\[
    W(\s, \t) \coloneq |\{b \in \B \mid M_\t(s_b) > 0\}|\text.
\]
In \Cref{sec:welfare}, we justify this definition.
Based on the coverage, we define the \emph{price of anarchy}~\cite{poa} and the \emph{price of stability}~\cite{pos} as
\[
    \text{PoA} \coloneq \sup_{I} \left(\frac{W(\text{OPT}(I))}{W(\text{worstNE}(I)}\right)
    \quad\text{and}\quad
    \text{PoS} \coloneq \sup_{I} \left(\frac{W(\text{OPT}(I))}{W(\text{bestNE}(I)}\right)\text,
\]
where for an instance $I$, we have $\text{OPT}(I)$ as the social optimum, $\text{worstNE}(I)$ as the Nash equilibrium with the lowest coverage and $\text{bestNE}(I)$ as the Nash equilibrium with the highest coverage.

\subsection{Related Work}
Research in the broad field of Hedonic Games was initiated by the seminal works of \citet*{DG80} and \citet*{BJ02}. In these games, the agents have preferences over all possible coalitions of agents and every partition of the set of agents into disjoint coalitions is considered an outcome. The crucial aspect of Hedonic Games is that the utility of an agent in some coalition only depends on the composition of her coalition. Given this, many variants of utility functions have been studied.

\paragraph{\bf Fractional Hedonic Games.}
Closest to our model are Fractional Hedonic Games (FHGs) introduced by~\citet*{bakermiller_conf} and later extended~\cite{bakermiller}, where each agent has an individual value for every other agent and the utility of some agent in a coalition is her average agent value within her coalition, i.e., it is the sum over the individual values of all the agents in her coalition divided by the number of agents in the coalition.

A nice feature of FHGs is that an instance can be described via a weighted directed \emph{valuation graph}, where an edge~$(u,v)$ with weight~$w$ encodes that agent~$u$ has a value of~$w$ for agent~$v$. A FHG is called \emph{symmetric} if an edge $(u,v)$ with weight $w$ in the corresponding valuation graph implies the existence of the reverse edge $(v,u)$ with the same weight $w$, i.e., pairs of agents value each other equally. A FHG is called \emph{simple} if all edge weights in the valuation graph are either $0$ or $1$. Thus, the valuations in simple symmetric FHGs can be modeled with an undirected unweighted valuation graph.

If the number of locations in our game equals the total number of bakers and millers and if the bakers can choose any location, then our game is equivalent to a simple symmetric FHG.
On the other hand, if the valuation graph of a simple symmetric FHG is bipartite and complete, then this is equivalent to our game.\footnote{FHGs and our game differ slightly in the definition of the agents' valuations. However, it is easy to show that the agents' preferences over coalitions/locations are equivalent.} Thus, our model generalizes simple symmetric FHGs that have bipartite and complete valuation graphs.

FHGs have been intensively studied with different notions of stability, like core, Nash or individual stability~\cite{BJ02,bakermiller,bilo2018nash,fhg-individual,carosi2019,DBLP:conf/wine/MonacoM23} and also in a modified version where the agents do not count themselves in the computation of their utility in a coalition~\cite{monaco2019,DBLP:journals/aamas/MonacoMV20}.
In our paper, we use Nash stability. In a Nash stable partition, no agent can improve her utility by unilaterally changing her coalition.

Nash stable partitions are not guaranteed to exist for FHGs with arbitrary agent valuations, but they do exist for non-negative valuations~\cite{bilo2018nash}, which include simple FHGs.
Also, it is NP-complete to decide if a given instance admits a Nash stable partition, even with symmetric valuations~\cite{fhg-individual}. If the quality of stable partitions is measured with the utilitarian welfare, then the price of anarchy is in $\Theta(n)$, even on unweighted paths, and the price of stability on weighted graphs is in $\Theta(n)$, even on weighted stars~\cite{bilo2018nash}. Note that we also prove bounds on the price of anarchy and stability but with a different social welfare function. \citet*{bilo2018nash} show that best response dynamics are not guaranteed to converge to a Nash stable partition even on unweighted bipartite graphs and that computing a Nash stable partition with maximum social welfare is NP-hard. However, the non-convergence result does not carry over to our game since the bipartite valuation graph in~\cite{bilo2018nash} is not complete.

\paragraph{\bf Hedonic Diversity Games.}
Also related are Hedonic Diversity Games (HDGs)~\cite{BredereckEI19,BoehmerE20}.  There, the agents have different types and their utility depends on the type-composition of their coalition. Similarly to FHGs, the utility of an agent is the fraction of own-type agents. The utilities in our game correspond to single-peaked utilities, a concept that dates back to \citet{black1948rationale}, and such preferences have also been investigated in HDGs~\cite{BredereckEI19,BoehmerE20}.

As with FHGs, our game is equivalent to HDGs with utilities that are single-peaked at $0$ if the number of locations equals the total number of bakers and millers and if all bakers can choose any location. Thus, by restricting the number of locations or the access to these locations, our game generalizes such HDGs.

\paragraph{\bf Schelling Games and Resource Selection.}
Another class of related games are Schelling Games~\cite{ChauhanLM18,Echzell0LMPSSS19,AgarwalEGISV21}. There, agents of different types select a location on a given graph and the utility of an agent is a function of the type composition of the neighborhood on her selected location. The crucial difference to FHGs and HDGs is that these neighborhoods may be non-overlapping. The agents' utility is a threshold function that attains its maximum value if the fraction of same-type agents in the neighborhood is at least $\tau$, for some $\tau \in [0,1]$. Recently, single-peaked utility functions have also been studied~\cite{BBLM22_single-peaked,FLMS23_single-peaked}.

Close to Schelling Games and to our game is the model presented by~\citet{ijcai-23}. In their setting, agents of different types jointly select resources. Like in Schelling Games, an agent is content with her resource if the fraction of same-type agents selecting the resource reaches at least a tolerance threshold $\tau \in [0,1]$. Two variants are studied: impact-blind and impact-aware agents. For the former, equilibrium existence is shown via a potential function argument while for the latter, this only holds for $\tau\leq\frac{1}{2}$.

By suitably modifying the utilities and assuming $\tau=1$, it can be shown that our Bakers and Millers Game is a special case of the model with impact-aware agents in~\cite{ijcai-23} with inverted utilities. However, since their potential argument only works for $\tau\leq \frac{1}{2}$, their existence result does not carry over to our game. Also, in the other direction, our positive results do not carry over to their model.

\paragraph{\bf Strategic Facility Location.}
Finally, strategic facility location models~\cite{OwenD98} are also related, in particular, the model by~\citet{Vetta02}, where sellers strategically select markets with a certain purchasing power and the price is explicitly determined by the delivery costs. Voronoi Games~\cite{DurrT07,FeldmannMM09}, Location Games on Networks~\cite{FournierS19}, Market Sharing Games~\cite{GoemansLMT06}, and Network Investment Games~\cite{Schmand0S19} are also similar since agents strategically select a location or resources to maximize their utility. Closer to our setting are models where the clients also face a strategic choice in the form of minimizing a weighted sum of their travel and waiting time~\cite{kohlberg1983equilibrium,PetersSV18,FeldottoLMS19}. Recently, this setting with both strategic facilities and strategic clients was studied as a sequential game~\cite{KrogmannLMS21,KrogmannLS23,Krogmann24}, where first sellers select a location on a given graph and then the customers on graph nodes decide how to distribute their purchasing power among neighboring sellers. Our model resembles a simultaneous variant of this game with different utility functions.

\section{Equilibrium Existence and Computation}

We observe that equilibria in the Bakers and Millers Game are not unique and can be quite different.
See \Cref{fig:multiple-ne} for an example with two locations $x$ and $y$, four bakers and two millers.
In the left equilibrium, all agents except one baker are at the same location~$x$.
On the right, each location has one miller and two bakers.

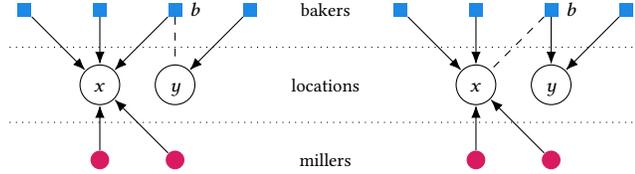
\begin{figure}[h]
    \centering
    \begin{tikzpicture}
        \node at (3., 0) {locations};
        \node at (3., -\trlength) {millers};
        \node at (3., \trlength) {bakers};
        
        \draw[dotted,-] (-1.2,\locsepdist) -- (5+2.2,\locsepdist);
        \draw[dotted,-] (-1.2,-\locsepdist) -- (5+2.2,-\locsepdist);

        \foreach \offset / \name in {0/l,5/r} {
            \node[vert] (x\name) at (\offset+0, 0) {$x$};
            \node[vert] (y\name) at (\offset+1, 0) {$y$};
            
            \node[miller] (f1\name) at (\offset+0, -\trlength) {};
            \node[miller] (f2\name) at (\offset+1, -\trlength) {};

            \node[baker] (c1\name) at (\offset-1, \trlength) {};
            \node[baker] (c2\name) at (\offset+0, \trlength) {};
            \node[baker, label=right:$b$] (c3\name) at (\offset+1, \trlength) {};
            \node[baker] (c4\name) at (\offset+2, \trlength) {};
        }

        \path (f1l) edge (xl);
        \path (f2l) edge (xl);
        
        \path (c1l) edge (xl);
        \path (c2l) edge (xl);
        \path (c3l) edge (xl);
        \path (c4l) edge (yl);

        \path[unchosen] (c3l) edge (yl);

        \path (f1r) edge (xr);
        \path (f2r) edge (yr);
        
        \path (c1r) edge (xr);
        \path (c2r) edge (xr);
        \path (c3r) edge (yr);
        \path (c4r) edge (yr);

        \path[unchosen] (c3r) edge (xr);
    \end{tikzpicture}
    \caption{An instance with two Nash equilibria which differ in the agents' utilities and the coverage. Of the bakers, only $b$ has both locations available in $L(b)$.}
    \label{fig:multiple-ne}
\end{figure}

In the special case of $L(b)=\L$, i.e., where each baker can choose every location, there is always a trivial pure Nash equilibrium by assigning all agents to a single location.
Thus, we now turn to the non-trivial case with restricted bakers and devise an algorithm to compute a pure Nash equilibrium.
This algorithm, shown in \Cref{alg:ne}, proceeds in three phases:
First, it tries to concentrate the bakers as much as possible in few locations so that they are placed in the fullest location they can access.
Then, a miller equilibrium is created by sequentially inserting each miller at the currently best location.
Finally, the bakers are redistributed to a specific baker equilibrium, which maintains the miller equilibrium.
We show correctness in the following proof with two embedded lemmas.

\makeatletter
\patchcmd{\@algocf@start}
  {-1.5em}
  {0pt}
  {}{}
\makeatother
\begin{algorithm}
    \caption{compute an equilibrium profile $(\s^*,\t^*)$}
    \label{alg:ne}
    $\L'\gets \L$, $\B' \gets \B$\;
    \For{$i=1$ to $|\L|$}{
        $\ell_i \gets \arg\max_{\ell \in \L'} |\{b \in \B' \mid \ell \in L(b) \}|$, a location with most bakers in range\;
        \For{baker $b \in \B'$ with $\ell_i \in L(b)$}{
            $s_b \gets \ell_i$\;
        }
        $\L' \gets \L' \setminus \{\ell_i\}$, $\B' \gets \B' \setminus \{b \mid \ell_i \in L(b)\}$\;
    }
    \For{miller $m \in \M$}{
        $t^*_m \gets$ location $\ell$ (with smallest index) maximizing $\frac{B_\s(\ell)}{M_\t(\ell) + 1}$\;\label{line:miller-insertion}
    }
    $\s^* \gets$ baker strategy profile maximizing $\Phi_{\t^*}(\s^*)$ (\Cref{lem:compute-maximizer})\;\label{line:maximizer}
\end{algorithm}

\begin{theorem}[Nash Equilibrium Algorithm]
    \label{thm:ne}
    \Cref{alg:ne} computes a pure Nash equilibrium in polynomial time.
\end{theorem}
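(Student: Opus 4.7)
The plan is to verify three things: polynomial runtime, that $(\s^*,\t^*)$ is a baker equilibrium, and that it is a miller equilibrium. Polynomial runtime is immediate for Phases~1 and~2 (each inner loop is linear in the instance size), and Phase~3 invokes \Cref{lem:compute-maximizer}, which is stated to compute a maximizer of $\Phi_{\t^*}$ in polynomial time.

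For the baker equilibrium, I would interpret the bakers' game with $\t^*$ fixed as a singleton congestion game with player-specific strategy sets $L(b)$ in which location $\ell$ yields utility $M_{\t^*}(\ell)/k$ to the $k$-th baker placed there. This game admits the Rosenthal-style exact potential
\[
\Phi_{\t^*}(\s)\;=\;\sum_{\ell\in\L} M_{\t^*}(\ell)\cdot H_{B_{\s}(\ell)},\qquad H_k=\sum_{i=1}^k \tfrac{1}{i},
\]
whose change under any single-baker deviation equals the baker's utility change. Since $\s^*$ maximizes $\Phi_{\t^*}$ over feasible baker profiles, no baker can strictly improve by a unilateral deviation, giving that $(\s^*,\t^*)$ is a baker equilibrium.

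For the miller equilibrium, I would first show that $\t^*$ is a miller equilibrium with respect to the intermediate baker profile $\s$ produced in Phase~1. This follows from a last-inserter argument: for a miller $m$ at $t^*_m$, let $m'$ be the last miller inserted at $t^*_m$. By the greedy rule on line~\ref{line:miller-insertion}, $m'$ preferred $t^*_m$ to every other $\ell$ at her insertion time. Combining this with the fact that $t^*_m$ receives no further millers after $m'$ while $M_\t(\ell)$ only grows afterwards for $\ell\neq t^*_m$ yields the miller-equilibrium inequality $B_\s(t^*_m)/M_{\t^*}(t^*_m)\geq B_\s(\ell)/(M_{\t^*}(\ell)+1)$ for every $\ell$. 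Then the second embedded lemma must show that replacing $\s$ by the maximizer $\s^*$ preserves these inequalities.

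The preservation step is the main obstacle. The baker-equilibrium conditions at $\s^*$ compare the ratios $M_{\t^*}(\ell)/(B_{\s^*}(\ell)+1)$ across locations feasible to a single baker, whereas the miller-equilibrium inequalities compare $B_{\s^*}(\ell)/(M_{\t^*}(\ell)+1)$ across all pairs of locations, and a direct rearrangement shows that these two sets of constraints actually pull in opposite directions. Hence the preservation proof must couple the construction of $\s$ in Phase~1, which greedily concentrates bakers into the locations feasible to the largest number of remaining bakers, with the Phase~2 insertion rule and the potential-maximizer characterization in Phase~3. I expect this second lemma to proceed by contradiction: suppose some miller $m$ profits from moving to $\ell$ at $(\s^*,\t^*)$ and trace back, using the potential-maximizer conditions on every baker whose placement differs between $\s$ and $\s^*$, to derive a chain of baker moves that contradicts either the Phase~1 concentration rule or the greedy choice of $t^*_m$ in Phase~2.
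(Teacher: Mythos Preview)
Your overall architecture matches the paper exactly: polynomial time via \Cref{lem:compute-maximizer}; baker equilibrium because $\Phi_{\t^*}$ is an exact potential and $\s^*$ maximizes it; miller equilibrium of $(\s,\t^*)$ via the last-inserter argument; and then a preservation lemma showing $(\s^*,\t^*)$ is still a miller equilibrium. The first three pieces are essentially what the paper does.

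The gap is in the preservation step, which you correctly flag as the main obstacle but do not prove. Your sketch (``trace back a chain of baker moves to a contradiction with Phase~1 or Phase~2'') is in the right spirit, but the paper's argument relies on structural facts that you have not yet isolated and that are not automatic. Specifically: Phase~1 assigns each baker to her lowest-index feasible location and produces $B_\s(\ell_i)\ge B_\s(\ell_{i+1})$, so every baker move from $\s$ to $\s^*$ goes to a strictly higher-index location; Phase~2 with the smallest-index tie-break gives $M_{\t^*}(\ell_i)\ge M_{\t^*}(\ell_{i+1})$. The paper then builds a \emph{difference graph} on miller-occupied locations (an arc for each baker whose location differs between $\s$ and $\s^*$), observes it is acyclic by the index monotonicity, and decomposes it into \emph{maximal} paths. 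For each path from $p_1$ to $p_m$, maximality of $\Phi_{\t^*}$ at $\s^*$ gives $\tfrac{B_{\s^*}(p_1)+1}{M_{\t^*}(p_1)}\ge\tfrac{B_{\s^*}(p_m)}{M_{\t^*}(p_m)}$, and the monotonicity $M_{\t^*}(p_1)\ge M_{\t^*}(p_m)$ is then used to conclude both that the minimum miller utility does not drop and that the best available deviation does not improve when passing from $\s$ to $\s^*$. Without the index-monotonicity of $M_{\t^*}$ along paths, the potential condition alone (which, as you note, points the ``wrong way'') is not enough; this is precisely the coupling of all three phases that makes the proof go through. Your proposal does not yet supply these ingredients, so the preservation lemma is currently an expectation rather than a proof.
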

\begin{proof}
\Cref{alg:ne} first greedily determines the order of the locations by the number of yet unassigned bakers in range and assigns the bakers to those locations in that order resulting in a baker profile $\s$.
See \Cref{fig:loop1} for an example.
We then iteratively insert millers to their best-response locations.
This yields a profile $(\s,\t^*)$ in which the millers are in equilibrium.
Finally, we compute a baker equilibrium profile $\s^*$ by choosing the state $s$ that maximizes the classic Rosenthal potential function~\cite{Rosenthal1973}
\[
    \Phi_{\t^*}(s) \coloneq \sum_{\ell \in \L} M_{\t^*}(\ell) \cdot H_{B_s(\ell)}\text,
\]
where $H_i$ denotes the $i$-th harmonic number.
We will show that the resulting profile $(\s^*,\t^*)$ remains an equilibrium for the millers, too.
We begin by proving that after inserting the millers in \cref{line:miller-insertion}, the millers are indeed in a miller equilibrium.

\begin{figure}
    \newcommand*{\figalign}{1.465}
    \centering
    \begin{subfigure}[T]{0.21\textwidth}
    \centering
    \begin{tikzpicture}
        \draw[dotted,-] (-0.2,\locsepdist) -- (3.35,\locsepdist);
        
        \node[vert] (x) at (0.9, 0) {$x$};
        \node[vert] (y) at (1.9, 0) {$y$};
        \node[vert] (z) at (2.9, 0) {$z$};

        \foreach \i in {0,1,2,3,4,5} {
            \node[baker] (b\i) at (\i*0.6, \trlength) {};
        }

        \draw[cblue,-] ($(x.center) + (-33.8:5+\noderadius)$) -- ($(b3.center) + (-33.8:7pt)$) arc (-33.8:90:7pt) -- ($(b0.center) + (0, 7pt)$) arc (90:213.8:7pt) -- ($(x.center) + (213.8:5+\noderadius)$) arc (213.8:326.2:5+\noderadius);
        \node[cblue] at (x.center |- 0,-0.65) {$\ell_1$};
        \node[cblue] at (z.center |- 0,-0.65) {$\ell_2$};
        \node[cblue] at (y.center |- 0,-0.65) {$\ell_3$};
        
        \draw[cblue,-] ($(z.center) + (5.2:5+\noderadius)$) -- ($(b5.center) + (5.2:7pt)$) arc (5.2:90:7pt) -- ($(b4.center) + (0, 7pt)$) arc (90:196.8:7pt) -- ($(z.center) + (196.8:5+\noderadius)$) arc (196.8:365.2:5+\noderadius);
        
        \path (b0) edge (x);
        \path (b1) edge (x);
        \path (b2) edge (x);
        \path (b3) edge (x);
        
        \path (b2) edge (y);
        \path (b3) edge (y);
        \path (b4) edge (y);
        
        \path (b3) edge (z);
        \path (b4) edge (z);
        \path (b5) edge (z);

        \useasboundingbox (current bounding box.south west) rectangle (current bounding box.north east |- 0,\figalign);
    \end{tikzpicture}
    \caption{Instance with baker assignments circled in blue.}
    \end{subfigure}
    \begin{subfigure}[T]{0.065\textwidth}
    \begin{tikzpicture}
        \node at (0, 0) {locations};
        \node at (0, \trlength cm) {bakers};
        \draw[dotted,-] (-0.5,\locsepdist) -- (0.5,\locsepdist);

        \useasboundingbox (current bounding box.south west) rectangle (current bounding box.north east |- 0,\figalign);
    \end{tikzpicture}
    \end{subfigure}
    \begin{subfigure}[T]{0.15\textwidth}
    \centering
    \begin{tikzpicture}
        \draw[dotted,-] (-0.3,\locsepdist) -- (2.3,\locsepdist);
        
        \node[vert] (x) at (0, 0) {$x$};
        \node[vert] (z) at (1, 0) {$z$};
        \node[vert] (y) at (2, 0) {$y$};
        \node[cblue] at (x.center |- 0,-0.65) {$\ell_1$};
        \node[cblue] at (z.center |- 0,-0.65) {$\ell_2$};
        \node[cblue] at (y.center |- 0,-0.65) {$\ell_3$};
        \node[baker] at (0, \trlength - 0.375) {};
        \node[baker] at (0, \trlength - 0.125) {};
        \node[baker] at (0, \trlength + 0.125) {};
        \node[baker] at (0, \trlength + 0.375) {};
        \node[baker] at (1, \trlength - 0.375) {};
        \node[baker] at (1, \trlength - 0.125) {};
        \useasboundingbox (current bounding box.south west) rectangle (current bounding box.north east |- 0,\figalign);
    \end{tikzpicture}
    \caption{Result ordered by removal time.}
    \label{fig:loop1b}
    \end{subfigure}
    \caption{Illustration of the first loop of \Cref{alg:ne}: Location~$x$ has the most bakers in its range, so it is chosen first as $\ell_1$ and assigned all possible bakers.
    Of the remaining bakers, location~$z$ has the most in its range.
    Finally, there are no bakers left for location~$y$.
    Observe that the bakers can never move to a location that was removed earlier (i.e. that is to the left in the ordered result in (b)).}
    \label{fig:loop1}
\end{figure}
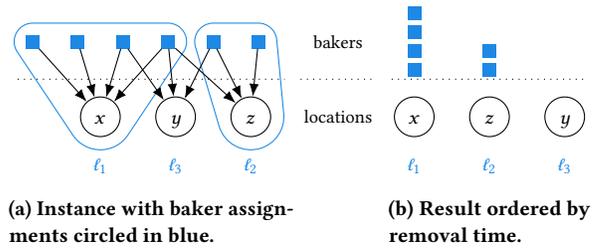

\begin{lemma}[Miller Equilibrium]
\label{lem:miller-eq-greedy-unweighted}
In $(\s,\t^*)$ every miller is allocated to her best-response location.
\end{lemma}
\begin{proof}
Assume for the sake of contradiction there is a miller $m$ on location $\ell$ that could improve to a location $\ell'$, i.e.,
\begin{align}
\frac{B_\s(\ell)}{M_{\t^*}(\ell)} < \frac{B_\s(\ell')}{M_{\t^*}(\ell') + 1}\text.
\end{align}
Consider the last miller that was placed on location $\ell$.
For the number $x$ of millers located at $\ell'$ at this time, we have
\begin{align}
\frac{B_\s(\ell)}{M_{\t^*}(\ell)} \ge  \frac{B_\s(\ell')}{x + 1}
\end{align}
at this time.
Otherwise, $m$ would have been placed on $\ell'$ instead.
As $x \le M_{\t^*}(\ell') $, the lemma follows.
\end{proof}

Before we show that the last step does maintain the property that millers are in equilibrium, let us remark some useful observations.
The number of assigned bakers in $\s$ is non-increasing with the index of the location, i.e., $B_\s(\ell_i) \ge B_\s(\ell_{i+1})$, and each baker $b$ is assigned to the location $\ell_i \in L(b)$ with smallest $i$ of locations in $L(b)$.
This can be seen in \Cref{fig:loop1b}.
Therefore, when moving from $\s$ to $\s^*$, bakers are only moved to locations with higher indices.
Furthermore, by inserting the millers at their best-response locations and breaking ties in favor of smaller indices, we have that $M_{\t^*}(\ell_i) \ge M_{\t^*}(\ell_{i+1})$.

\begin{lemma}[Miller Equilibrium Maintained]
\label{lem:maintain-miller-eq}
Let $\t^*$ be a miller profile computed by \Cref{alg:ne} and let $\s^* \coloneq \arg\max_\x \Phi_{\t^*}(\x)$.
The state $(\s^*,\t^*)$ is a miller equilibrium.
\end{lemma}

\begin{proof}
To prove the lemma, we analyze the changes when moving from $\s$ to $\s^*$.
In particular, we show that the minimal utility of any miller does not decrease and that the maximal achievable utility of a deviation does not increase.
To that end, we denote by $L_{\t^*} \subseteq \L$ the subset of locations occupied by millers in $\t^*$.
We consider the \emph{difference graph} which is a multi-graph $G \coloneq (L_{\t^*},E)$ that describes the change of bakers from $\s$ to $\s^*$.
See \Cref{fig:difference-graph} for an example.
Therefore, $E$ contains an arc from location $\ell_i$ to location $\ell_j$ for each baker $b$ with $s_b = \ell_i$ and $s^*_b = \ell_j$.
Note that we always have $i<j$ and, therefore, the difference graph $G$ is acyclic.

\begin{figure}[h]
    \centering
    \begin{subfigure}[m]{0.29\textwidth}
    \centering
    \begin{tikzpicture}
        \node[anchor=west] at (-1.5, 0) {locations};
        \node[anchor=west] at (-1.5, \trlength cm) {bakers};
        \node[anchor=west] at (-1.5, -\trlength cm) {millers};
        
        \draw[dotted,-] (-1.4,\locsepdist) -- (3.5,\locsepdist);
        \draw[dotted,-] (-1.4,-\locsepdist) -- (3.5,-\locsepdist);
        \foreach \name / \i / \val in {v/0/11,w/1/7,x/2/6,y/3/4,z/4/4} {
            \node[vert] (\name) at (0.8*\i, 0) {$\name$};
            \node[baker, minimum height=\val*3pt, anchor=south, label={[text height=0.5ex]above:$\val$}] (b\name) at (0.8*\i, 0.7) {};
        }
        \node[miller] at (0, -\trlength + 0.15) {};
        \node[miller] at (0, -\trlength - 0.15) {};
        \node[miller] at (0.8*1, -\trlength) {};
        \node[miller] at (0.8*2, -\trlength) {};
        \node[miller] at (0.8*3, -\trlength) {};
        \node[miller] at (0.8*4, -\trlength) {};

        \path (bw) edge (bx);
        \path (bx) edge (by);
        \path[bend left = 15] (bv.north east) edge (bz.north west);
    \end{tikzpicture}
    \end{subfigure}
    \begin{subfigure}[m]{0.18\textwidth}
    \centering
    \begin{tikzpicture}
        \node at (1,2) {};
        \node[vert, label=left:$P_1$] (v) at (0, 1) {$v$};
        \node[vert, label=left:$P_2$] (w) at (0, 0) {$w$};
        \node[vert] (x) at (0.9*1, 0) {$x$};
        \node[vert] (y) at (0.9*2, 0) {$y$};
        \node[vert] (z) at (0.9*2, 1) {$z$};
        
        \path (w) edge (x) (x) edge (y);
        \path (v) edge (z);
    \end{tikzpicture}
    \end{subfigure}
    \caption{For \Cref{lem:maintain-miller-eq} and the state $(\s,\t^*)$ on the left, the arrows indicate the strategy changes of bakers from $\s$ to $\s^*$, e.g., there is a baker $b$ with $s_b = v$ and $s_b^* = z$.
    This results in the difference graph $G$ on the right, with the two paths $P_1$ and $P_2$ making up the path decomposition $\mathcal{P}$.}
    \label{fig:difference-graph}
\end{figure}
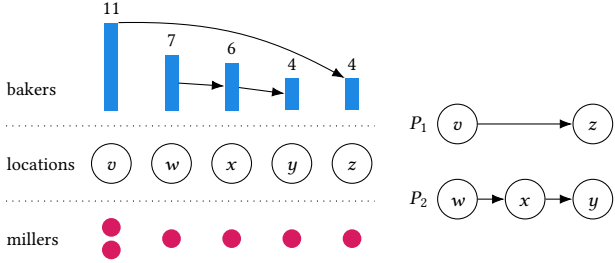

We consider a path decomposition $\mathcal{P}$ of $G$ into maximal paths.
In particular, there is no location that is both a start and an endpoint of paths, as these two paths would be merged.
Let $(p_1,p_2,\dots,p_m) \in \mathcal{P}$ be a path consisting of locations.
As $\s^*$ maximizes the function $\Phi_{\t^*}(x) \coloneq \sum_{\ell \in \L} M_{\t^*}(\ell) \cdot H_{B_x(\ell)}$, we have that
\begin{align}
    \frac{M_{\t^*}(p_1)}{B_{\s^*}(p_1) +1}\le \frac{M_{\t^*}(p_m)}{B_{\s^*}(p_m)} \ \ \text{or equiv.}\ \  \frac{B_{\s^*}(p_1) +1}{M_{\t^*}(p_1)}\ge \frac{B_{\s^*}(p_m)} {M_{\t^*}(p_m)}\label{potential-improvement}
\end{align}
since otherwise rearranging the bakers backwards along that path would increase the value of $\Phi$.
Note that rearranging changes only the number of bakers on $p_1$ and $p_m$.

By the choice of paths in $\mathcal{P}$ being \emph{maximal}, we have that $p_1$ has only outgoing arcs and $p_m$ has only incoming arcs in $G$.
Therefore,
\begin{align}
    B_{\s^*}(p_1) +1 \le B_{\s}(p_1) \quad\text{and}\quad B_{\s^*}(p_m) \ge B_{\s}(p_m)+1\text.\label{monotone-improvement}
\end{align}
Combining \Cref{potential-improvement,monotone-improvement} we get that the maximal utility of any miller for baker profile $\s^*$ is not larger than for $\s$:
\begin{align}
    \frac{B_{\s}(p_1)}{M_{\t^*}(p_1)} \ge \frac{B_{\s^*}(p_1)+1}{M_{\t^*}(p_1)}  \ge  \frac{B_{\s^*}(p_m)}{M_{\t^*}(p_m)} \ge \frac{B_{\s}(p_m)+1}{M_{\t^*}(p_m)}\text.
    \label{improvement-maximal}
\end{align}
Since we have $M_{\t^*}(p_1) \ge M_{\t^*}(p_m)$ because $p_1$ has a lower index in $(\ell_1,\dots, \ell_{|\L|})$ than $p_m$, we also have
\begin{align}
    \frac{B_{\s*}(p_1)}{M_{\t^*}(p_1)}  \ge  \frac{B_{\s}(p_m)}{M_{\t^*}(p_m)}
    \label{improvement-minimal}
\end{align}
which implies that the minimal utility of any miller for $\s^*$ is not smaller than for $\s$.
Now assume for the sake of contradiction that there is a miller $m$ that has a profitable deviation from location $\ell \coloneq \t^*(m)$ to location $\ell'$ in $(\s^*,\t^*)$. 
Furthermore, let $m'$ be the miller with minimal utility $u_\text{min}$ in $(\s^*,\t)$.
Therefore, we have that
\begin{align}
    \frac{B_{\s^*}(\ell')}{M_{\t^*}(\ell')+1}  \ge  \frac{B_{\s^*}(\ell)}{M_{\t^*}(\ell)}  \ge   u_\text{min}
    \label{deviation}
\end{align}
where the second inequality follows from \Cref{improvement-minimal}.
However, by ~\Cref{lem:miller-eq-greedy-unweighted} this deviation for $m'$ was not profitable in $(\s,\t^*)$:
\begin{align}
    \frac{B_{\s}(\ell')}{M_{\t^*}(\ell')+1}  \le  u_\text{min}\text.
    \label{no-min-improve}
\end{align}
Combining \Cref{deviation,no-min-improve} implies that
$B_{\s^*}(\ell') > B_{\s}(\ell')$, so $\ell'$ is the endpoint of some path in $\mathcal{P}$.
We now consider a path in the difference graph $G$ that ends at $\ell'$ and starts from some location $\ell_\text{start}$.
By \Cref{improvement-maximal} we have that
\begin{align}
    \frac{B_{\s}(\ell_\text{start})}{M_{\t^*}(\ell_\text{start})}  \ge  \frac{B_{\s^*}(\ell')}{M_{\t^*}(\ell')}\text.
\end{align}
By the existence of a path from $\ell'$ to $\ell_\text{start}$, we get $M_{\t^*}(\ell_\text{start}) \ge M_{\t^*}(\ell')$.
Therefore, we have
\begin{align}
    \frac{B_{\s}(\ell_\text{start})}{M_{\t^*}(\ell_\text{start}) +1}  \ge  \frac{B_{\s^*}(\ell')}{M_{\t^*}(\ell')+1}\text.
\end{align}
However, then deviating to $\ell_\text{start}$ would already have been an improving deviation for $m'$ in $(\s,\t^*)$, which contradicts \Cref{lem:miller-eq-greedy-unweighted}.
Therefore, if no miller has a profitable unilateral deviation for baker profile $\s$, there is also none in $\s^*$, which concludes the lemma.
\end{proof}

Finally, it remains to show that $(\s^*,\t^*)$ is a baker equilibrium, which follows from the fact that $\Phi$ is an exact potential function for the induced baker game if we fix all millers.
Assume for the sake of contradiction that there is a baker $b$ that has a profitable deviation from $\ell = s^*_b$ to $\ell'$, with
\begin{align}
\frac{M_{\t^*}(\ell)}{B_{\s^*}(\ell)} < \frac{M_{\t^*}(\ell')}{B_{\s^*}(\ell') +1}\text.
\end{align}
But then
\begin{align*}
    \Phi_{\t^*}((\s^*_{-b},\ell'))=\Phi_{\t^*}(\s^*) - \frac{M_{\t^*}(\ell)}{B_{\s^*}(\ell)} + \frac{M_{\t^*}(\ell')}{B_{\s^*}(\ell') +1} > \Phi_{\t^*}(\s^*)\text,
\end{align*}
which contradicts the choice of $\s^*$ as a maximizer of $\Phi_{\t^*}$.
In the following \Cref{lem:compute-maximizer}, we show how to compute the maximizer in polynomial time and thus our current \Cref{thm:ne} follows.
\end{proof}

To compute the maximizer in \cref{line:maximizer} of \Cref{alg:ne}, we employ a reduction to integral \textsc{MinCostFlow} similar to the algorithm of \citet*[Theorem 2]{potential-min-algo} for symmetric congestion games.

\begin{lemma}[Potential Function Maximizer]
    \label{lem:compute-maximizer}
    The maximizer of $\Phi_{\t^*}(s) \coloneq \sum_{\ell \in \L} M_{\t^*}(\ell) \cdot H_{B_s(\ell)}$ can be computed in polynomial time.
\end{lemma}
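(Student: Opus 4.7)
The plan is to reduce the maximization of $\Phi_{\t^*}$ to an integral minimum-cost flow computation, exploiting the separable concave structure of the potential. Because $H_k - H_{k-1} = 1/k$, placing the $k$-th baker at location $\ell$ contributes exactly $M_{\t^*}(\ell)/k$ to $\Phi_{\t^*}$. These marginal contributions are non-increasing in $k$, which is exactly the structure that an integral min-cost flow with parallel arcs of strictly ordered costs can optimize.

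The network $N$ I would construct has a source $s$, a sink $t$, a node for each baker $b \in \B$, and a node for each location $\ell \in \L$. For each baker $b$, I add a unit-capacity zero-cost arc $(s,b)$, and for every $\ell \in L(b)$ a unit-capacity zero-cost arc $(b,\ell)$, which encodes the location restrictions. For each location $\ell$, I add $|\B|$ parallel unit-capacity arcs from $\ell$ to $t$, the $i$-th of which carries cost $-M_{\t^*}(\ell)/i$. Finally, I require exactly $|\B|$ units of flow from $s$ to $t$ and compute an integral min-cost flow, which is polynomial-time solvable.

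The correctness argument proceeds as follows. Because the costs $-M_{\t^*}(\ell)/i$ of the arcs leaving $\ell$ are non-decreasing in $i$, any optimal integral flow that sends $k$ units through $\ell$ uses the $k$ cheapest of these arcs: otherwise swapping an expensive used arc for a cheaper unused one would strictly decrease the cost while preserving feasibility. Consequently, if $k$ units pass through $\ell$, the total cost contributed at $\ell$ equals $-M_{\t^*}(\ell)\cdot H_k$. Summing over all locations, the cost of any integral flow of value $|\B|$ equals $-\Phi_{\t^*}(\s)$, where $\s$ is the baker profile read off the baker-to-location arcs carrying one unit of flow. Hence minimizing the flow cost is equivalent to maximizing $\Phi_{\t^*}$, and the polynomial bound follows from the fact that $N$ has $O(|\B|+|\L|)$ nodes, $O(|\B|\cdot|\L|)$ arcs, and integral data.

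The step I expect to require the most care is the ordering argument for the parallel sink arcs, together with verifying that the negative arc costs cause no issues (they do not, since $N$ is acyclic in the flow direction, so standard integral min-cost flow algorithms apply). Feasibility of routing $|\B|$ units is immediate whenever $L(b)\neq\emptyset$ for every $b$, which is implicit in the model. Everything else is a direct appeal to classical min-cost flow machinery, mirroring the reduction of \citet*[Theorem 2]{potential-min-algo} for symmetric singleton congestion games.
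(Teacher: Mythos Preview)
Your proposal is correct and essentially identical to the paper's own proof: the same source--baker--location--sink network with unit-capacity zero-cost arcs encoding $L(b)$ and $|\B|$ parallel sink arcs from each location with costs $-M_{\t^*}(\ell)/i$, the same cheapest-arcs-first argument yielding cost $-\Phi_{\t^*}(\s)$, the same observation that negative costs are harmless because there are no negative cycles, and the same citation of \citet*[Theorem 2]{potential-min-algo}.
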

\begin{proof}
The set of nodes of the \textsc{MinCostFlow} instance corresponds to the union of the set of bakers $\B$, the set of locations $\L$, and the set of a source $s$ and a sink $t$.
There are edges with capacity $1$ and cost $0$ from $s$ to all bakers and from each baker to her accessible locations.
From each location $\ell$, there are $|\B|$ parallel edges to $t$ with capacity $1$ and costs $ - \frac{M_{\t^*}(\ell)}{1}, - \frac{M_{\t^*}(\ell)}{2}, \dots, - \frac{M_{\t^*}(\ell)}{|\B|}$.
We give an example of this construction in \Cref{fig:mincostflow}.

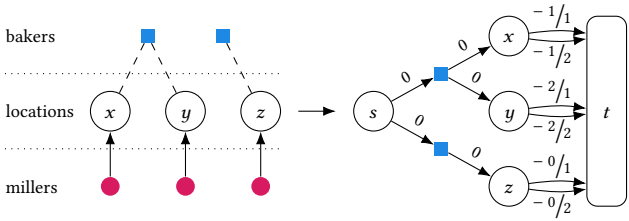
\begin{figure}
    \centering
    \begin{tikzpicture}
        \node[anchor=west] at (-1.5, 0) {locations};
        \node[anchor=west] at (-1.5, -\trlength) {millers};
        \node[anchor=west] at (-1.5, \trlength) {bakers};
        
        \draw[dotted,-] (-1.4,\locsepdist) -- (2.3,\locsepdist);
        \draw[dotted,-] (-1.4,-\locsepdist) -- (2.3,-\locsepdist);
        
        \node[vert] (s) at (3.5, 0) {$s$};
        \node[vert, rectangle, rounded corners, minimum height = 2*\noderadius + 2 cm] (t1) at (6.6,0) {$t$};
        \node[vert, rectangle, draw=none] (t0) at (6.6, 1) {};
        \node[vert, rectangle, draw=none] (t2) at (6.6,-1) {};
        
        \path (2.5,0) edge (3,0);
        
        \foreach \i in {0, 1} {
            \node[baker] (norm_b\i) at (0.5+1*\i, \trlength) {};
            \node[baker] (reduc_b\i) at (4.4, 0.5-1*\i) {};
            \path (s) edge["0"] (reduc_b\i);
        }
        \foreach \i / \name in {0/x, 1/y, 2/z} {
            \node[vert] (norm_\name) at (1*\i, 0) {$\name$};
            \node[miller] (norm_m\i) at (1*\i, -\trlength) {};
            \node[vert] (reduc_\name) at (5.3, 1-1*\i) {$\name$};
        }
        \foreach \name / \options / \cost in {reduc//0, norm/{unchosen}/} {
            \path[\options] (\name_b0) edge (\name_x);
            \path[\options] (\name_b0) edge["\cost"] (\name_y);
            \path[\options] (\name_b1) edge["\cost"] (\name_z);
        }
        \foreach \i / \name / \millers in {0/x/1, 1/y/2, 2/z/0} {
            \path (reduc_\name) edge[bend left=8, "{\large$\sfrac{-\millers}{1}$}" {xshift=-3pt,yshift=-2pt}] (t\i);
            \path (reduc_\name) edge[bend right=8, "{\large$\sfrac{-\millers}{2}$}" {xshift=-3pt,yshift=2pt, below}] (t\i);
        }
        \path (norm_m0) edge (norm_x);
        \path (norm_m1) edge (norm_y);
        \path (norm_m2) edge (norm_z);
    \end{tikzpicture}
    \caption{To compute the maximizer of $\Phi_{\t^*}$, the instance on the left with the given assignment of millers $\t^*$ is reduced to the \textsc{MinCostFlow} instance on the right by \Cref{lem:compute-maximizer}. All edge capacities are $1$ and the costs are given by the labels.}
    \label{fig:mincostflow}
\end{figure}

Since in any integral \textsc{MinCostFlow} for a given location only the edges with minimal cost are used, the cost of a flow is equal to $-\Phi_{\t^*}(\s) = -\sum_{\ell \in \L} M_{\t^*}(\ell) \cdot H_{B_\s(\ell)}$ for the corresponding baker assignment $\s$.
Thus, an integral \textsc{MinCostFlow} in this network corresponds to an assignment of bakers to locations that maximizes the function $\Phi_{\t^*}$.

Note that our instance contains negative costs but no negative cycles. Therefore, we may use Orlin's algorithm~\cite{orlin}.
\end{proof}

We observe that the runtime of \Cref{alg:ne} is dominated by the computation of the \textsc{MinCostFlow}. The construction yields an instance with $O(|\B|+|\L|)$ nodes and $O(|\B||\L|)$ edges.
In \Cref{sec:approx}, we will give a bound on the social welfare approximation of the algorithm.

\section{Social Welfare}
\label{sec:welfare}

The most prominent notion of social welfare is utilitarian social welfare, i.e., the summation of the agents' utilities.
However, in the game at hand, this sum essentially simplifies to the number of millers plus the number of covered bakers, where the latter denotes bakers for which there is a miller on a location they can choose:
\begin{align}
    \text{Welfare} =& \sum_{b\in \B} u_b(\s,\t) + \sum_{m\in \M} u_m(\s,\t)\notag \\
    =& \sum_{\ell : B_\s(\ell) \ne 0} B_\s(\ell) \frac{M_\t(\ell)}{B_\s(\ell)} + \sum_{\ell : M_\t(\ell) \ne 0} M_\t(\ell) \frac{B_\s(\ell)}{M_\t(\ell)}\notag \\
    =& \sum_{\ell : B_\s(\ell) \ne 0} M_\t(\ell) + \sum_{\ell : M_\t(\ell) \ne 0} B_\s(\ell)\text. \label{l9}
\end{align}
Note that in equilibrium, the first term of \Cref{l9} evaluates to $|\M|$ and the second term is equal to the number of covered bakers.
We therefore define the latter as \emph{coverage} and use it as a metric for social welfare in the rest of the paper.
\begin{definition}[Coverage]
The coverage of a strategy profile $(\s,\t)$ is the number of bakers with at least one miller at the same location:
\[
    W(\s, \t) \coloneq |\{b \in \B \mid M_\t(s_b) > 0\}|\text.
\]
\end{definition}
For this measure, we show that the price of anarchy is linear in $|\B|$, while the price of stability is $1 + \frac{\min(|\L|,|\M|)-1}{|\M|}$.
We also show that computing both the social welfare optimum and the optimal Nash equilibrium is NP-hard.
On the positive side, we show that our \Cref{alg:ne} actually computes a Nash equilibrium that approximates the optimal social welfare by a factor of $\left(1 + \frac{\min(|\L|,|\M|)-1}{|\M|}\right)\frac{e}{e-1}$.

\subsection{Price of Anarchy and Stability}

We show that the price of anarchy is $|\B|$, which is the worst possible in terms of the number of bakers.
This follows from the fact that in every instance there exists a Nash equilibrium that covers at least one baker, while for some instances there is an equilibrium that covers exactly one baker.
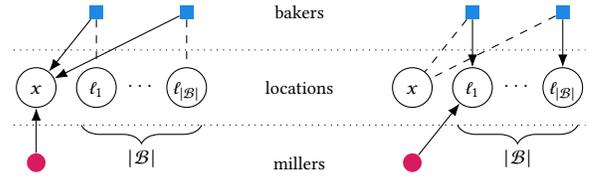
\begin{figure}[h]
    \centering
    \begin{tikzpicture}
        \node at (3.5, 0) {locations};
        \node at (3.5, \trlength) {bakers};
        \node at (3.5, -\trlength) {millers};
        \draw[dotted,-] (-0.3,\locsepdist) -- (5+2.3,\locsepdist);
        \draw[dotted,-] (-0.3,-\locsepdist) -- (5+2.3,-\locsepdist);

        \foreach \var / \offset \ in {a/0,b/5} {
            \node[vert] (l0\var) at (\offset+ 0, 0) {$x$};
            \node[vert] (l2\var) at (\offset+ 0.8, 0) {$\ell_1$};
            \node at (\offset+ 1.4, 0) {$\,\cdots$};
            \node[vert] (l5\var) at (\offset+ 2, 0) {$\ell_{|\B|}$};
            
            \node[miller] (f\var) at (\offset, -\trlength) {};
            
            \foreach \i in {2,5} {
                \node[baker] (c\i\var) at (\offset+ 0.4*\i, \trlength) {};
            }
            \draw[decorate,decoration={brace,amplitude=7pt},-] (\offset + 2.2, -\locsepdist) -- (\offset +0.6, -\locsepdist) node [midway,yshift=-0.45cm] {$|\B|$};
        }
        \path (fa) edge (l0a);
        \path (fb) edge (l2b);
        \path (c2a) edge (l0a);
        \path (c5a) edge (l0a);
        \path (c2b) edge (l2b);
        \path (c5b) edge (l5b);
        
        \path[unchosen] (c2b) edge (l0b);
        \path[unchosen] (c5b) edge (l0b);
        \path[unchosen] (c2a) edge (l2a);
        \path[unchosen] (c5a) edge (l5a);
    \end{tikzpicture}
    \caption{This instance illustrates the worst case for the price of anarchy.
    The bakers are restricted to $x$ and the location directly below them while the miller is unrestricted.
    Left: The social welfare optimum, right: a pure Nash equilibrium.}
    \label{fig:poa}
\end{figure}

\begin{theorem}[Price of Anarchy]
    The price of anarchy is $|\B|$.
\end{theorem}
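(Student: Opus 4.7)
The plan is to prove matching upper and lower bounds of $|\B|$. The lower bound is essentially handed to us by the construction shown in \Cref{fig:poa}: I would verify that placing all agents at $x$ is feasible (since $x \in L(b_i)$ for every $b_i$) and yields coverage $|\B|$, making it a social optimum, and that the profile in which each baker $b_i$ sits at her individual location $\ell_i$ with the single miller located at some $\ell_j$ is indeed a pure Nash equilibrium with coverage $1$. The equilibrium verification is routine: the miller attains utility $1$ at $\ell_j$ and every other location gives at most $1$ as well; the baker at $\ell_j$ has utility $1$ while moving to $x$ would yield $0$; every other baker has utility $0$ but moving to $x$ would also yield $0/1 = 0$, so no strict improvement is available. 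This witnesses $\mathrm{PoA} \ge |\B|/1 = |\B|$.

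For the upper bound, the key observation is that every pure Nash equilibrium covers at least one baker. Since no strategy profile can cover more than $|\B|$ bakers in total, this immediately gives $\mathrm{PoA} \le |\B|$. I would establish the observation by contradiction: suppose $(\s,\t)$ is an equilibrium with coverage $0$, so no location simultaneously hosts a baker and a miller. Then every miller $m$ has $B_\s(t_m) = 0$ and hence utility $0$. Picking any baker $b$ (we may assume $\B \ne \emptyset$, otherwise the statement is vacuous) and setting $\ell = s_b$, the zero-coverage assumption gives $M_\t(\ell) = 0$, so deviating $m$ to $\ell$ would give utility $B_\s(\ell)/(M_\t(\ell)+1) \ge 1 > 0$, contradicting the miller equilibrium property.

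I do not anticipate any substantive obstacle here: both directions are short. The only thing to watch for is the degenerate case $\B = \emptyset$ or $\M = \emptyset$, where coverage is identically $0$ and the PoA ratio is not meaningful; I would flag this briefly and assume both sides non-empty. Combining the two matching bounds then yields $\mathrm{PoA} = |\B|$.
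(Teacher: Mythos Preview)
Your proposal is correct and follows essentially the same approach as the paper: the same lower-bound instance from \Cref{fig:poa} and the same upper-bound observation that every equilibrium covers at least one baker. You simply spell out the contradiction argument for the upper bound (a deviating miller would gain utility at least $1$), which the paper leaves implicit by calling the bound ``trivial.''
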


\begin{proof}

The price of anarchy is trivially at most $|\B|$ since at least one of the bakers is covered by a miller in equilibrium.
For a matching lower bound, we construct the following instance:
Let the set of locations be $\L \coloneq \{x, \ell_1, \dots, \ell_{|\B|}\}$ and let there be one miller.
For each baker $b_i \in \{b_1, \dots, b_{|\B|}\}$, let the set of permissible locations be $L(b_i) \coloneq \{\ell_i, x\}$.
There is a pure Nash equilibrium where each baker $b_i$ is placed on location~$\ell_i$ and the miller chooses $\ell_1$ with only one baker covered.
The social optimum has all agents on location~$x$.
See \Cref{fig:poa}.
Thus, we have $\text{PoA}\geq|\B|$.
\end{proof}

In contrast to the low welfare of the worst equilibrium, the best equilibrium is surprisingly good.
\begin{theorem}[Price of Stability]
    The PoS is $1 + \frac{\min(|\L|,|\M|)-1}{|\M|}$.
\end{theorem}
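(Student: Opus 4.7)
Write $m \coloneq |\M|$ and $k \coloneq \min(|\L|,|\M|)$, so the target bound is $\mathrm{PoS} = (m+k-1)/m$. The plan has two parts.

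For the \emph{lower bound}, I would parameterise by a large integer $n$ and construct an instance with $k$ ``useful'' locations $\ell_1,\ldots,\ell_k$, $mn+1$ bakers restricted to $L(b)=\{\ell_1\}$, and $n$ bakers restricted to $L(b)=\{\ell_j\}$ for each $j\in\{2,\ldots,k\}$. A social optimum places at least one miller on every $\ell_j$ and covers all $(m+k-1)n+1$ bakers. In any pure Nash equilibrium I claim all $m$ millers must sit on $\ell_1$: if some miller is on $\ell_j$ with $j\geq 2$, her utility is at most $n$, whereas a unilateral move to $\ell_1$ gives $\tfrac{mn+1}{M_\t(\ell_1)+1}\geq\tfrac{mn+1}{m}>n$ whenever not all millers are already there. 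Hence the unique NE has coverage exactly $mn+1$, and the ratio $\tfrac{(m+k-1)n+1}{mn+1}\to(m+k-1)/m$ as $n\to\infty$.

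For the \emph{upper bound}, given an optimum $(\s^*,\t^*)$ with active location set $L^*$, I first note that $|L^*|\leq k$ since an active location needs a miller ($|L^*|\leq m$) and is itself a location ($|L^*|\leq |\L|$). The plan is to produce an NE $(\s',\t')$ of coverage at least $\tfrac{m}{m+k-1}W^*$ by picking a suitable subset $S\subseteq L^*$ of ``surviving'' active locations, distributing the $m$ millers over $S$ so that the miller best-response inequalities hold both within $S$ and against $\L\setminus S$, and then setting $\s' = \arg\max_\s \Phi_{\t'}(\s)$ as in \Cref{lem:compute-maximizer}. Since every baker with $L(b)\cap S\neq\emptyset$ is then assigned to some location of $S$ by the $\Phi$-maximiser, the coverage equals $|\{b : L(b)\cap S\neq\emptyset\}|$, and it remains to show that $S$ can always be chosen so that this quantity is at least $\tfrac{m}{m+k-1}W^*$.

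The main obstacle is the interaction between the two demands on $S$: $S$ must be large enough to retain most OPT-covered bakers, yet small enough that the $m$ millers on $S$ have no improving deviation to $L^*\setminus S$. The lower-bound instance shows that this tension is real: when one location in $L^*$ dominates the others, $S$ must shrink all the way to that single location. I would formalise the trade-off by summing the miller best-response inequality $B(\ell)/M(\ell)\geq B(\ell')/(M(\ell')+1)$ over $\ell\in S$ and $\ell'\in L^*\setminus S$, weighted by $M(\ell)$, and then using $\sum_\ell M(\ell)=m$ together with $|L^*\setminus S|\leq k-1$ to isolate $W\geq\tfrac{m}{m+k-1}W^*$. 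A suitable adaptation of the difference-graph analysis of \Cref{lem:maintain-miller-eq} to this custom $S$ should then verify that the resulting profile really is a Nash equilibrium.
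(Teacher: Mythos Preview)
Your lower-bound construction is essentially the paper's (the paper uses all $|\L|$ locations rather than just $k$, but the extra locations carry only captive bakers that are never covered, so this is cosmetic).

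For the upper bound you have the right skeleton---restrict to the locations of an optimum, realise the bakers via the $\Phi$-maximiser, and bound the loss through the miller best-response inequalities---and your final counting argument (min miller utility $n$ forces coverage $\ge mn$, while at most $k-1$ miller-free optimum locations each hold $\le n$ bakers) is exactly what the paper does. Where you diverge is in treating the choice of $S\subseteq L^*$ and of the miller distribution on $S$ as the ``main obstacle'' to be resolved by hand. The paper sidesteps this entirely: it keeps $S=L_{\mathrm{opt}}$ equal to \emph{all} $q$ optimum locations and simply reruns \Cref{alg:ne} on the instance restricted to $L_{\mathrm{opt}}$. By \Cref{thm:ne} the output is already a Nash equilibrium inside $L_{\mathrm{opt}}$, so the only remaining check is that no miller wants to leave $L_{\mathrm{opt}}$---and that is a one-line swap argument: any location outside carrying more bakers than some location inside would contradict the optimality of $L_{\mathrm{opt}}$. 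Your intuition that ``$S$ must shrink to a single location when one location dominates'' is therefore misplaced: $L_{\mathrm{opt}}$ stays at size $q$, and it is \Cref{alg:ne} that concentrates the millers as an \emph{output}, not a design choice you need to make up front. With that simplification there is no need for the averaging over best-response inequalities or a bespoke adaptation of the difference-graph analysis.
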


The theorem follows immediately from the following two lemmas.
For the upper bound, we use the strategy profile with the highest coverage and construct a pure Nash equilibrium where the millers use a subset of the miller location of this optimal state.

\begin{lemma}[Price of Stability Upper Bound]
    \label{lem:pos-upper}
    There exists a pure Nash equilibrium that approximates the optimal coverage by a factor of $1 + \frac{\min(|\L|,|\M|)-1}{|\M|}$.
\end{lemma}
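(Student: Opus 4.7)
The plan is to take the optimum $(\s^o,\t^o)$ with coverage $W^*$ and denote by $L^o$ its set of miller-occupied locations, with $|L^o|\le k=\min(|\L|,|\M|)$. I will exhibit a pure Nash equilibrium $(\s^*,\t^*)$ whose miller-occupied set $L^*$ is contained in $L^o$, and show $W^*\le (1+(k-1)/|\M|)\,W_{NE}$.

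The construction specifies $\s^*$ via the following placement rule, determined by $L^*$ and $\s^o$: (i) any baker $b$ with $L(b)\cap L^*\ne\emptyset$ is placed at some location in $L(b)\cap L^*$ via a best-response step (concretely, via the $\Phi_{\t^*}$-maximization of \Cref{lem:compute-maximizer}), so that $b$ is covered in $(\s^*,\t^*)$; (ii) any baker $b$ with $s^o_b\in L^o$ and $L(b)\cap L^*=\emptyset$ is placed at $s^o_b\in L^o\setminus L^*$, which is feasible and yields utility $0$, the same as any alternative in $L(b)$; (iii) the remaining bakers (those with $L(b)\cap L^o=\emptyset$) are placed at any location in $L(b)$.

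With this placement, every baker covered in OPT but uncovered in $(\s^*,\t^*)$ lies at some location in $L^o\setminus L^*$, hence
\begin{equation*}
W^*-W_{NE}\;\le\;\sum_{\ell\in L^o\setminus L^*}B_{\s^*}(\ell).
\end{equation*}
Applying the miller Nash condition to the miller with minimum utility $u_{\min}$ at some $\ell^\star\in L^*$ considering a deviation to any $\ell\notin L^*$ (where $M_{\t^*}(\ell)=0$) yields $u_{\min}\ge B_{\s^*}(\ell)$. Using $|L^o\setminus L^*|\le k-1$ (since $|L^*|\ge 1$) together with the identity that coverage equals the sum of miller utilities, so $W_{NE}\ge|\M|\,u_{\min}$, we conclude
\begin{equation*}
W^*\;\le\;W_{NE}+(k-1)\,u_{\min}\;\le\;W_{NE}+\frac{k-1}{|\M|}\,W_{NE}\;=\;\left(1+\frac{k-1}{|\M|}\right)W_{NE}.
\end{equation*}

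The main obstacle is establishing the existence of a Nash equilibrium $(\s^*,\t^*)$ with the prescribed structure. A natural construction begins with $L^*=L^o$, distributes the $|\M|$ millers over $L^o$ to form a miller-equilibrium with respect to the baker placement rule above, and iteratively adjusts $L^*$ (and recomputes the baker placement via \Cref{lem:compute-maximizer}) until every $\ell\notin L^*$ satisfies $B_{\s^*}(\ell)\le u_{\min}$, which is exactly the miller Nash condition for deviations outside $L^*$. The delicate core of the proof is showing that this adjustment terminates while maintaining $L^*\subseteq L^o$ and the invariant that "lost" bakers (those covered by OPT but not by the NE) remain parked at their OPT location in $L^o\setminus L^*$, so that the counting argument above applies.
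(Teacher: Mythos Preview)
Your counting argument in the second half is essentially the paper's: once you have a Nash equilibrium whose millers occupy a subset $L^*$ of an optimal location set, the bound follows exactly as you wrote, via $W_{NE}\ge |\M|\,u_{\min}$ and $B_{\s^*}(\ell)\le u_{\min}$ for each of the at most $k-1$ locations in $L^o\setminus L^*$.

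The gap is precisely where you flag it: you have not constructed the equilibrium. Your iterative adjustment of $L^*$ with recomputation of the baker placement is only a sketch, and you give no argument for termination or for the invariant $L^*\subseteq L^o$. In fact your placement rule (ii) deliberately parks uncovered-but-OPT-covered bakers at locations in $L^o\setminus L^*$, which can make exactly those locations attractive to millers and force $L^*$ to shrink; nothing you have written prevents this from cycling or from emptying $L^*$ of useful structure. As stated, the proposal is a plan, not a proof.

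The paper avoids the iteration entirely by a cleaner choice of target set. Rather than taking $L^o$ to be the miller-occupied locations of some optimal profile, it fixes $L_{\text{opt}}$ to be any set of $q=\min(|\L|,|\M|)$ locations maximizing coverage, and then runs \Cref{alg:ne} on the instance restricted to $L_{\text{opt}}$. By \Cref{thm:ne} this yields a Nash equilibrium on the restricted instance; bakers with no feasible location in $L_{\text{opt}}$ are placed arbitrarily outside. The only remaining check is that no miller wants to deviate to a location outside $L_{\text{opt}}$, and this now follows directly from the optimality of $L_{\text{opt}}$ as a \emph{set}: if some outside location $\ell$ carried more bakers (in the constructed profile) than some $\ell'\in L_{\text{opt}}$, swapping $\ell$ for $\ell'$ would strictly increase coverage, contradicting the choice of $L_{\text{opt}}$. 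Since $|L_{\text{opt}}|=|\M|$ whenever $|\M|<|\L|$, a miller can always occupy some location in $L_{\text{opt}}$ alone, so her equilibrium utility already dominates $B_{\s^*}(\ell)$ for every outside $\ell$. This one-shot construction replaces your unproven fixed-point search, and the same counting you wrote then finishes the bound.
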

\begin{proof}
    Let $L_\text{opt}$ denote the $q \coloneq \min(|\L|,|\M|)$ locations that maximize the number of covered bakers, i.e., bakers~$b$ with $L(b) \cap L_\text{opt} \neq \varnothing$.
    Clearly, the state \text{OPT} where one miller is placed on each of the locations in $L_\text{opt}$, while all bakers are assigned to a location in $L_\text{opt}$ if possible, has the optimal coverage.

    To obtain an equilibrium state $\text{NE}$ with the claimed coverage, we run \Cref{alg:ne} on a modified instance in which we remove all locations that are not in $L_\text{opt}$.
    To complete $\text{NE}$, we add the previously removed locations and assign bakers that have no available location in $L_\text{opt}$ to arbitrary permissible locations.

    We show that $\text{NE}$ is still a pure Nash equilibrium:
    Bakers outside $L_\text{opt}$ have no possible location available to them inside $L_\text{opt}$.
    By \Cref{thm:ne}, bakers inside $L_\text{opt}$ have no profitable deviating move to another location inside $L_\text{opt}$. Furthermore,
    Bakers cannot profit from moving outside $L_\text{opt}$, because there are no millers there.
    By \Cref{thm:ne}, also the millers have no profitable deviating move to a location inside $L_\text{opt}$.
    It remains to show that millers do not want to move outside $L_\text{opt}$:
    Note that in NE, every baker that can be assigned to a location in $L_\text{opt}$ is assigned to a location in $L_\text{opt}$.
    Thus, each of the $|\M|$ locations inside $L_\text{opt}$ has at least as many bakers assigned to it as the most profitable location $\ell$ outside $L_\text{opt}$.
    Otherwise, we could construct a set $L_\text{opt}'$ which covers more bakers than $L_\text{opt}$ by swapping in $\ell$.
    If not $L_\text{opt}=\L$, then $L_\text{opt}$ has $|\M|$ locations and occupying a location in $L_\text{opt}$ as the only miller is possible. Therefore, in $\text{NE}$ any miller's utility is at least as large as the number of bakers on any location outside $L_\text{opt}$.

    It remains to bound the social welfare.
    We denote by $n$ the minimal utility that a miller receives in the equilibrium $\text{NE}$ which implies a coverage of at least $|M|n$.
    On the other hand, consider any location $\ell \in L_\text{opt}$ that is not covered by any miller.
    Equilibrium conditions imply that there are at most $n$ bakers on $\ell$ as otherwise, the miller with minimal utility could improve.
    There are at most $q-1$ uncovered locations in $L_\text{opt}$ for NE, so there are at most $(q-1)n$ bakers covered in OPT but not in NE.
    Thus, we have
\begin{align*}
    \frac{W(\text{OPT})}{W(\text{NE})} \leq \frac{W(\text{NE}) + (q-1)n}{W(\text{NE})} = 1+ \frac{(q-1)n}{W(\text{NE})}& \\
    \leq 1+ \frac{(q-1)n}{|\M|n} = 1 + \frac{\min(|\L|,|\M|)-1}{|\M|}\text.&\qedhere
\end{align*}
\end{proof}

Note that, the algorithm implied by this lemma is more complex than \Cref{alg:ne} because it relies on knowing the set $L_\text{opt}$ with optimal coverage of bakers.
Next, we match the upper bound with a lower bound on the quality of the best pure Nash equilibrium:

\begin{lemma}[Price of Stability Lower Bound]
    The price of stability is at least $1 + \frac{\min(|\L|,|\M|)-1}{|\M|}$.
\end{lemma}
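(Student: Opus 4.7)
The plan is to exhibit a family of instances $\{I_k\}_{k\in\mathbb{N}}$ whose coverage ratio tends to $1+(q-1)/|\M|$ from below, where $q\coloneq\min(|\L|,|\M|)$. In $I_k$ I take the given $|\L|$ locations and $|\M|$ millers, and add $k|\M|$ bakers restricted to $\{\ell_1\}$ together with $k-1$ bakers restricted to $\{\ell_j\}$ for every $j\in\{2,\ldots,q\}$. Since every baker has a singleton strategy set, baker equilibrium holds in every profile, so I only have to reason about the millers.

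The social optimum is easy: place one miller at each of $\ell_2,\ldots,\ell_q$ (possible since $q\le|\M|$) and the remaining $|\M|-q+1\ge 1$ millers at $\ell_1$. Every baker is then covered, giving $W(\text{OPT})=k|\M|+(q-1)(k-1)$.

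The crux is to show that every Nash equilibrium concentrates all $|\M|$ millers at $\ell_1$, yielding $W(\text{NE})=k|\M|$. Suppose for contradiction that an NE has $m_j\ge 1$ millers at some $\ell_j$ with $j\ge 2$, and write $m_1\le|\M|-1$ for the count at $\ell_1$. Deviating one such miller to $\ell_1$ would change her utility from at most $(k-1)/m_j\le k-1$ to $k|\M|/(m_1+1)\ge k|\M|/|\M|=k$, a strict improvement that contradicts equilibrium. An analogous comparison rules out any miller sitting at an empty location $\ell_j$ with $j>q$ (current utility $0$, deviation utility at least $k$). Thus the only candidate equilibrium puts all millers on $\ell_1$, which is indeed an equilibrium (miller utility $k$, every other location offers at most $k-1$), so the NE is unique and $W(\text{NE})=k|\M|$.

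Dividing gives the ratio $1+(q-1)(k-1)/(k|\M|)$, which tends to $1+(q-1)/|\M|$ as $k\to\infty$. Since the PoS is a supremum over instances, this establishes $\text{PoS}\ge 1+(q-1)/|\M|$. The main subtlety is the uniqueness step: one has to notice that the bound $k|\M|/(m_1+1)\ge k$ uses $m_1+1\le|\M|$, which holds precisely because at least one miller was at some $\ell_j\ne\ell_1$; without this observation, a spread-out profile might look like a plausible NE.
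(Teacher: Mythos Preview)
Your proof is correct and follows essentially the same approach as the paper: build an instance where one location carries so many fixed bakers that every miller must sit there in equilibrium, while the social optimum would spread millers over $q$ locations, and then let the size parameter tend to infinity. The only cosmetic difference is the exact counts---the paper uses $n|\M|+1$ bakers at the heavy location and $n$ at each of the remaining $|\L|-1$ locations, whereas you use $k|\M|$ and $k-1$ at only the first $q$ locations---but the uniqueness-of-NE argument and the limiting ratio are the same.
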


\begin{proof}
We construct the following instance for a given integer $n$:
Let the set of locations be $\L \coloneq \{x, \ell_2, \dots, \ell_{|\L|}\}$.
For location $x$ we have $n|\M|+1$ bakers restricted to only $x$.
For each remaining location $\ell$, we have $n$ bakers restricted to only $\ell$.
The Nash equilibrium has all millers on $x$.
The social optimum covers $q \coloneq \min(|\L|,|\M|)$ locations including $x$.
See \Cref{fig:pos}.
So we have
\[
    \text{PoS} \geq \frac{|\M|n+1 + n(q-1)}{|\M|n+1}= 1 + \frac{n(q-1)}{|\M|n+1}
\]
\[
    \text{with}\quad \lim_{n\to\infty}\left(1 + \frac{n(q-1)}{|\M|n+1}\right) = 1 + \frac{\min(|\L|,|\M|)-1}{|\M|}\text.\qedhere
\]
\end{proof}

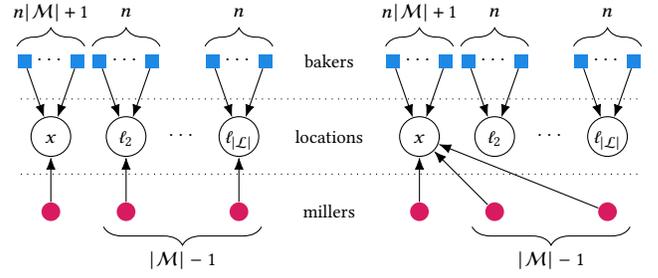
\begin{figure}[h]
    \centering
    \begin{tikzpicture}
        \node at (3.7, 0) {locations};
        \node at (3.7, \trlength) {bakers};
        \node at (3.7, -\trlength) {millers};
        \draw[dotted,-] (-0.4,\locsepdist) -- (4.9+2.9,\locsepdist);
        \draw[dotted,-] (-0.4,-\locsepdist) -- (4.9+2.9,-\locsepdist);
        
        \foreach \var / \offset / \labelloc \ in {a/0/below right,b/4.9/below} {
            \node[vert] (l0\var) at (\offset+ 0, 0) {$x$};
            \node[vert] (l2\var) at (\offset+ 1, 0) {$\ell_2$};
            \node at (\offset+ 1.75, 0) {$\,\cdots$};
            \node[vert] (l5\var) at (\offset+ 2.5, 0) {$\ell_{|\L|}$};
            
            \foreach \i / \desc in {0/$n|\M|+1$,2/$n$,5/$n$} {
                \node[miller] (f\i\var) at (\offset+ 0.5*\i, -\trlength) {};
                
                \node[baker] (c0\i\var) at (\offset-0.35+ 0.5*\i, \trlength) {};
                \node at (\offset+ 0.5*\i, \trlength) {$\,\cdots$};
                \node[baker] (c1\i\var) at (\offset+0.35+ 0.5*\i, \trlength) {};
                \path (c0\i\var) edge (l\i\var);
                \path (c1\i\var) edge (l\i\var);
                
                \draw[decorate,decoration={brace,amplitude=7pt},-] (\offset-0.45+ 0.5*\i, 1.2*\trlength) -- (\offset+0.45+ 0.5*\i, 1.2*\trlength) node [midway,yshift=0.45cm] {\desc};
            }
            \draw[decorate,decoration={brace,amplitude=7pt},-] (\offset + 2.8, -1.2*\trlength) -- (\offset +0.7, -1.2*\trlength) node [midway,yshift=-0.45cm] {$|\M|-1$};
        }
        \foreach \i in {0,2,5} {
            \path (f\i a) edge (l\i a);
            \path (f\i b) edge (l0b);
        }
    \end{tikzpicture}
    \caption{An instance with $|\L|=|\M|$ with the price of stability approaching $1 + \frac{\min(|\L|,|\M|)-1}{|\M|}$ as $n \rightarrow \infty$.
    Left: The social welfare optimum, right: the only pure Nash equilibrium.
    \label{fig:pos}}
\end{figure}

\subsection{Complexity and Approximation of Optimal Solutions}
\label{sec:approx}
We first show hardness of computing optimal solutions and optimal equilibria.
To that end, we reduce from the \textsc{Maximum $k$-Coverage} problem~\cite{maxcov}.

\begin{definition}[\textsc{Maximum $k$-Coverage}]
    Given an integer $k$ and a set of sets $S$, find a subset $S' \subseteq S$ with $|S'| = k$ such that the number of included items $\left|\bigcup_{T \in S'}{T}\right|$ is maximal.
\end{definition}

We now use the NP-hardness of \textsc{Maximum $k$-Coverage} to show that finding the social welfare optimum and the socially optimal Nash equilibrium is NP-hard.

\begin{theorem}[Social Welfare Optimum]
    Computing the social welfare optimum is NP-hard.
\end{theorem}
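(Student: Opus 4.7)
The plan is to reduce from \textsc{Maximum $k$-Coverage}, which is NP-hard by the reference cited in the paper. The key insight is that a miller's only role in computing the coverage is to ``activate'' the location she sits on, so placing $k$ millers on $k$ distinct locations corresponds naturally to selecting $k$ sets, and the bakers whose feasible sets intersect these chosen locations correspond exactly to the covered elements.

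Given an instance $(S,k)$ of \textsc{Maximum $k$-Coverage} over a universe $U$, I would construct a Bakers and Millers instance as follows: let $\L \coloneq S$ (one location per set), create one baker $b_u$ for each element $u \in U$ with feasible location set $L(b_u) \coloneq \{T \in S \mid u \in T\}$, and set $|\M| \coloneq k$. This construction is clearly polynomial. I then claim that the optimal coverage $W(\text{OPT})$ equals the value of the maximum $k$-coverage.

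For the forward direction, given $k$ sets $S' \subseteq S$ covering $N$ elements, place one miller on each location of $S'$; every element $u$ covered by some $T \in S'$ corresponds to a baker $b_u$ with $T \in L(b_u)$, and assigning $b_u$ to $T$ makes her covered. Thus $W(\text{OPT}) \geq N$. For the converse, observe that without loss of generality, an optimal solution places the $k$ millers on $k$ distinct locations: stacking two millers on the same location never covers an additional baker (since a location is already ``active'' with one miller), so any duplicate miller can be moved to an unoccupied location without decreasing coverage, and we may assume $k\leq|\L|$ (otherwise the problem is trivial). Then the set $S' \subseteq S$ of occupied locations has size $k$, and each covered baker $b_u$ must have chosen some location in $S'$ which, by construction, is a set containing $u$. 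Hence $S'$ covers at least $W(\text{OPT})$ elements of $U$.

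Since a polynomial-time algorithm for the social welfare optimum in our game would directly yield the optimum value of the \textsc{Maximum $k$-Coverage} instance, the problem is NP-hard. I do not foresee a major obstacle here; the only care point is the ``WLOG distinct locations'' observation, which requires noting that coverage depends only on the \emph{set} of miller-occupied locations, not on the multiplicities, and that any suboptimal instance with $k > |\L|$ can be handled separately (e.g., covering all bakers with feasible locations is trivial).
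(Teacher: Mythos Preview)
Your proposal is correct and follows essentially the same reduction as the paper: locations are the sets of the \textsc{Maximum $k$-Coverage} instance, bakers are the elements with their feasible sets being the containing sets, and there are $k$ millers. The paper's converse direction is slightly terser (it simply lets $S'$ be the set of miller-occupied locations in a social optimum and notes that covered bakers correspond to covered items), whereas you spell out the ``WLOG distinct locations'' step explicitly; both arguments are valid and amount to the same thing.
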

\begin{proof}
    We reduce from \textsc{Maximum $k$-Coverage}.
    Let the set of locations be $\L \coloneq S$ and for each item $s$ in the base set $\bigcup_{T \in S}{T}$, we create a baker $b$ with $L(b) \coloneq \{T \in S \mid s \in T\}$.
    We add $k$ millers.

    If there is an optimal solution $S'$ that covers $n$ items, then for the corresponding strategy profile $(\s, \t)$ we place the millers at exactly the locations $S'$ and let all bakers choose a location in $S'$ if possible.
    The result is a coverage of $W(\s, \t)=n$.

    For a social optimum $(\s, \t)$, we set $S'$ to the set of locations that have at least one miller placed on them.
    Since $(\s, \t)$ is a social optimum, all bakers that can be placed in $S'$ are placed in $S'$ and therefore the number of covered bakers is equal to the number of items contained in $\bigcup_{T \in S'}{T}$.
\end{proof}

For optimal pure Nash equilibria, the reduction is more involved:
Adding dummy bakers to each location ensures there is no pure Nash equilibrium with multiple millers on the same location.

\begin{theorem}[Socially Optimal Equilibrium]
    Computing the pure Nash equilibrium with the best coverage is NP-hard.
\end{theorem}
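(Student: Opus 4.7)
The plan is to reduce from \textsc{Maximum $k$-Coverage}, extending the reduction used for the social welfare optimum so that equilibrium structure forces miller placement to correspond to a $k$-cover. Given a \textsc{Max $k$-Coverage} instance with set family $S$ over universe $U$, I would construct the Bakers and Millers game with locations $\L = \{\ell_T : T \in S\} \cup \{\ell_0\}$, where $\ell_0$ is an auxiliary ``dump'' location; $k$ unrestricted millers; for each $s \in U$ a real baker with $L(b_s) = \{\ell_T : s \in T\} \cup \{\ell_0\}$; and, as the key new ingredient, $N$ \emph{dummy bakers} at each $\ell_T$ restricted to that single location, where $N > |U|$. Location $\ell_0$ receives no dummies.

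First I would argue that in every pure Nash equilibrium, each miller sits alone at some location of $\{\ell_T : T \in S\}$. Two millers sharing a single $\ell_T$ each get utility at most $(N+|U|)/2$, whereas any unoccupied $\ell_{T'} \in S$ offers at least $N$ to a solitary miller; since $N > |U|$, such a deviation is strictly profitable (an unoccupied $\ell_{T'}$ always exists, assuming WLOG $|S| \ge k$). Moreover, $\ell_0$ is never attractive to a miller: it carries no dummies, so its utility is at most $|U| < N$, dominated by the utility at any occupied $\ell_T$.

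Combining this with baker best-response, in any NE the miller locations form a size-$k$ subset $S' \subseteq S$, and a real baker $b_s$ is covered precisely when $L(b_s) \cap S' \neq \varnothing$ (she strictly prefers any miller-occupied location to one of utility $0$), so the coverage equals $kN + \bigl|\bigcup_{T \in S'} T\bigr|$. The optimal NE coverage is therefore $kN + n^*$, where $n^*$ is the \textsc{Max $k$-Coverage} optimum, \emph{provided} that some NE realizes $S' = S^*$ for an optimal cover $S^*$. To exhibit such an NE, I place one miller on each $\ell_T \in S^*$, assign covered real bakers within $S^*$ by invoking the potential maximizer of \Cref{lem:compute-maximizer} restricted to those locations and bakers, and send every uncovered baker to $\ell_0$. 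Baker stability then follows from the potential argument in the proof of \Cref{thm:ne} (for covered bakers) and from utility-$0$ indifference (for uncovered ones); miller stability follows from the previous paragraph.

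The main obstacle is making the NE with miller set $S^*$ coexist with uncovered bakers: they must live somewhere, but if they accumulate at a miller-less $\ell_{T'} \notin S^*$ they could make it attractive enough to lure a miller away from $S^*$. The auxiliary location $\ell_0$ is introduced precisely to absorb them: because $\ell_0$ is in every real baker's feasible set but carries no dummies, all uncovered bakers can be sent there without attracting a miller, leaving every $\ell_{T'} \notin S^*$ with zero real bakers. Once NE existence is verified this way, computing the best NE recovers $n^* = (\text{optimal coverage}) - kN$, completing the reduction and proving NP-hardness.
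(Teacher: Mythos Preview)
Your reduction is correct and follows the same core idea as the paper: reduce from \textsc{Maximum $k$-Coverage} and pad every set-location with $N>|U|$ dummy bakers so that in any equilibrium the $k$ millers must sit on $k$ distinct set-locations. The one genuine difference is how you certify that the optimal cover $S^*$ is realized by \emph{some} equilibrium. You introduce an auxiliary dump location $\ell_0$ (in every real baker's feasible set, with no dummies) to absorb the uncovered bakers, so that every $\ell_{T'}\notin S^*$ carries exactly $N$ bakers and a miller on $S^*$ trivially has no profitable deviation. The paper dispenses with $\ell_0$: uncovered bakers sit on locations outside $S^*$, and miller stability is instead argued via a swap---if a miller at $\ell'\in S^*$ strictly preferred some $\ell\notin S^*$, then $(S^*\setminus\{\ell'\})\cup\{\ell\}$ would cover strictly more items, contradicting optimality of $S^*$. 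Your gadget buys a cleaner stability check at the cost of a slightly larger instance and the need to verify that $\ell_0$ itself never attracts a miller (which you do); the paper's version keeps the construction minimal but hides a small exchange argument in one sentence. Either way the reduction goes through.
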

\begin{proof}
    Again, we reduce from \textsc{Maximum $k$-Coverage}.
    Let the set of locations be $\L \coloneq S$ and for each item $s$ in the base set $\bigcup_{T \in S}{T}$, we create a baker $b$ with $L(b) \coloneq \{T \in S \mid s \in T\}$.
    For each location $\ell$, we also add $q = \left|\bigcup_{T \in S}{T}\right| + 1$ bakers $b$ with $L(b)=\{\ell\}$.
    We add $k$ millers.
    We show that there is an optimal solution $S'$ that covers $n$ items if and only if there is a socially optimal pure Nash equilibrium $(\s, \t)$ with coverage $W(\s, \t)=n+kq$.

    If there is an optimal solution $S'$ that covers $n$ items, then for the corresponding miller profile $\t$ we place the millers at exactly the locations $S'$.
    Given $\t$, we choose an arbitrary baker equilibrium $\s$.
    Note that in $\s$ every baker that has a location available in $S'$ is placed in $S'$.
    The state $(\s, \t)$ is a pure Nash equilibrium because a miller that profitably deviates to a location $\ell\notin S'$ implies that there is a set $S''$ with more covered items.
    A miller cannot deviate to a location inside $S'$ because, with two millers at the same location, both receive a utility of less than $q$.
    Thus, we have a pure Nash equilibrium with a coverage of $W(\s, \t)=n+kq$.

    For a socially optimal Nash equilibrium $(\s, \t)$, we set $S'$ to the set of locations that have at least one miller placed on them.
    In $(\s, \t)$ no two millers are placed at the same location since that would yield a utility below $q$.
    Since $(\s, \t)$ is a socially optimal Nash equilibrium, all bakers that have an available location in $S'$ choose a location in $S'$.
    Therefore the number of covered bakers is equal to $n$, the number of items contained in $\bigcup_{T \in S'}{T}$ with $kq$ extra bakers added.
\end{proof}

With a similar technique as in our proof for the upper bound of the price of stability in \Cref{lem:pos-upper}, we show that the pure Nash equilibrium computed by our polynomial-time \Cref{alg:ne} is a good approximation of the best Nash equilibrium in terms of social welfare.
The added factor $\frac{e}{e-1}$ with Euler's number $e$ results from the fact, that our algorithm greedily selects miller locations instead of choosing them optimally.

\begin{theorem}[Nash Equilibrium Welfare Approximation]
    \Cref{alg:ne} computes a pure Nash equilibrium that approximates the optimal coverage by a factor of $\left(1 + \frac{\min(|\L|,|\M|)-1}{|\M|}\right)\frac{e}{e-1}$.
\end{theorem}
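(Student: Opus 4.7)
The plan is to bridge the algorithm's coverage $W(\s^*,\t^*)$ and the social optimum $W(\text{OPT})$ through an intermediate quantity $C_q \coloneq \sum_{i=1}^{q} B_\s(\ell_i)$, where $q \coloneq \min(|\L|,|\M|)$ and $\ell_1,\dots,\ell_{|\L|}$ is the order in which \Cref{alg:ne} picks locations in its first loop. I will establish two inequalities, $C_q \geq (1-\tfrac{1}{e})\,W(\text{OPT})$ and $C_q \leq \bigl(1 + \tfrac{q-1}{|\M|}\bigr)\,W(\s^*,\t^*)$; chaining them immediately yields the theorem.

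The first inequality is the classical max-$k$-coverage guarantee applied to the first loop of \Cref{alg:ne}. Each iteration picks a location maximising the number of as-yet-unassigned bakers that can reach it, and then ``covers'' exactly those bakers, so the first $q$ iterations realise the standard greedy max-$k$-coverage procedure with $k=q$. As in the proof of \Cref{lem:pos-upper}, the optimum $W(\text{OPT})$ is attained by $q$ well-chosen locations, so the $(1-1/e)$ approximation ratio of greedy coverage gives $C_q \geq (1 - \tfrac{1}{e})\,W(\text{OPT})$.

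For the second inequality, let $L_{NE} \coloneq \{t^*_m : m \in \M\}$ and let $n$ be the minimum miller utility in $(\s^*,\t^*)$; the identity $W(\s^*,\t^*) = \sum_{\ell \in L_{NE}} B_{\s^*}(\ell) \geq |\M|\,n$ is immediate. Any baker counted in $C_q$ but not in $W(\s^*,\t^*)$ has an accessible location in $\{\ell_1,\dots,\ell_q\}$ but none in $L_{NE}$, so her assignment in $\s$ lies in $\{\ell_1,\dots,\ell_q\}\setminus L_{NE}$. The first miller inserted in \cref{line:miller-insertion} goes to $\ell_1$, which maximises $B_\s(\cdot)$ with smallest index by construction, so at most $q-1$ locations of $\{\ell_1,\dots,\ell_q\}$ sit outside $L_{NE}$.

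For each such $\ell \notin L_{NE}$, \Cref{lem:miller-eq-greedy-unweighted} gives $B_\s(\ell) \leq u_m(\s,\t^*)$ for every miller $m$ (since $\ell$ is an empty deviation target with $M_{\t^*}(\ell)=0$), and the minimum-utility monotonicity $\min_m u_m(\s,\t^*) \leq \min_m u_m(\s^*,\t^*) = n$ is exactly the bound already established en route to \Cref{lem:maintain-miller-eq}. Hence $B_\s(\ell) \leq n$, and summing over the at-most-$(q-1)$ relevant locations yields $C_q - W(\s^*,\t^*) \leq (q-1)\,n \leq \tfrac{q-1}{|\M|}\,W(\s^*,\t^*)$, which completes the second inequality. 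The main technical obstacle is this final chain of estimates: reusing the minimum-miller-utility monotonicity baked into the proof of \Cref{lem:maintain-miller-eq} and carefully matching the bakers missed by the equilibrium against unoccupied locations among $\{\ell_1,\dots,\ell_q\}$. Everything else is routine.
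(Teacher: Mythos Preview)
Your proof is correct and follows essentially the same approach as the paper: both compare the equilibrium coverage to the greedy $q$-coverage $C_q$ via the minimum miller utility $n$ (yielding the $1+\tfrac{q-1}{|\M|}$ factor) and then invoke the $\tfrac{e}{e-1}$ greedy max-coverage guarantee against $W(\text{OPT})$. The only cosmetic difference is that you justify $B_\s(\ell)\le n$ for unoccupied $\ell$ through \Cref{lem:miller-eq-greedy-unweighted} plus the minimum-utility monotonicity from \Cref{lem:maintain-miller-eq}, whereas the paper appeals directly to the sorting of the $\ell_i$; your route is arguably the more careful of the two.
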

\begin{proof}
    We note that the greedy algorithm for the \textsc{Maximum $k$-Coverage} problem has an approximation ratio of $\frac{e}{e-1}$~\cite{maxcov-greedy}.
    We observe that for $q \coloneq \min(|\L|,|\M|)$, this greedy solution picks exactly the locations $\ell_1,\ldots,\ell_{q}$ as defined by \Cref{alg:ne} to place the millers.
    Furthermore, the algorithm assigns millers to a subset of those locations, i.e., $\ell_1,\ldots,\ell_x$ for some $x \leq q$ because the algorithm iteratively assigns millers to their best response location.
    Hence, it either assigns millers to an already covered location or the next uncovered location since the number of bakers on location $\ell_i$ is non-increasing with $i$.

    Again, we denote by $n$ the minimal utility that a miller receives in the equilibrium computed by \Cref{alg:ne}.
    Then the coverage in the equilibrium is at least $|\M|n$.
    On the other hand, consider any location $\ell \in \ell_{x+1},\ldots,\ell_{q}$ that is not covered by a miller.
    By the sorting of the locations, there are at most $n$ bakers on $\ell$.
    As there are at most $q-1$ uncovered locations in $\ell_1,\ldots,\ell_{q}$, the total number of bakers covered by the greedy solution but not \Cref{alg:ne} is at most $(q-1)n$.
    Thus, the ratio between the coverage of the greedy solution and the equilibrium computed by \Cref{alg:ne} is at most
    \[
        \frac{|\M|n + (q-1)n}{|\M|n} = 1 + \frac{\min(|\L|,|\M|)-1}{|\M|}\text.
    \]
    Combining this with the factor $\frac{e}{e-1}$ of the greedy approximation yields the theorem.
\end{proof}

\section{Conclusion and Future Work}
Our model adds location choice to the classical setting of simple symmetric Fractional Hedonic Games and to Hedonic Diversity Games with single-peaked utilities.
This extension renders proving the existence of pure Nash equilibria much more challenging and we cope with this problem by carefully designing a three-stage constructive procedure that ensures that neither bakers nor millers want to deviate from the constructed state of the game. It is still an open question if this result can also be proven via a potential function approach.

The computed pure Nash equilibrium approximates the best possible social welfare by a factor of $\left(1 + \frac{\min(|\L|,|\M|)-1}{|\M|}\right)\frac{e}{e-1}$.
Note that the individual factors are tight:
The price of stability is $1 + \frac{\min(|\L|,|\M|)-1}{|\M|}$ and $\frac{e}{e-1}$ is the best achievable factor for selecting locations unless $\text{P}=\text{NP}$~\cite{setcover-inapprox}.
However, it is an open problem, if the combination of both factors is also a tight upper bound.

An interesting future direction is to add weights to the agents.
Our results on the price of anarchy and stability also hold in this setting with only minor modifications of the proofs.
The bakers and the millers individually still play Congestion Games, but already for weighted agents on only one side, there is no potential function as can be seen by the improving response cycle in~\Cref{fig:cycle}.
\begin{figure}[h]
    \centering
    \begin{subfigure}{0.566\columnwidth}
    \centering
    \begin{tikzpicture}
        \node[anchor=west] at (-2.8, 0) {locations};
        \node[anchor=west] at (-2.8, \trlength) {bakers};
        \node[anchor=west] at (-2.8, -\trlength) {millers};
        \node[vert] (x) at (-1, 0) {$x$};
        \node[vert] (y) at (0, 0) {$y$};
        \node[vert] (z) at (1, 0) {$z$};
        
        \draw[dotted,-] (-2.7,-\locsepdist) -- (1.75,-\locsepdist);
        \draw[dotted,-] (-2.7,\locsepdist) -- (1.75,\locsepdist);

        \foreach \i in {1,2,3,4,5,6,7,8,9,10,11,12} {
            \pgfmathsetmacro{\facloc}{(\i - 6.5) * 0.3}
            \node[miller] (f\i) [label=below:$1$] at (\facloc, -\trlength) {};
        }

        \node[baker] (c1) [label=above:$5$] at (-1.33, \trlength) {};
        \node[baker] (c2) [label=above:$8$] at (-0.67, \trlength) {};
        \node[baker] (c3) [label=above:$8$] at (0, \trlength) {};
        \node[baker] (c4) [label=above:$5$] at (0.67, \trlength) {};
        \node[baker] (c5) [label=above:$6$] at (1.33, \trlength) {};

        \foreach \i in {1,2,3,4,5,6} {
            \path (f\i) edge (x);
        }
        \foreach \i in {7,8} {
            \path (f\i) edge (y);
        }
        \foreach \i in {9,10,11,12} {
            \path (f\i) edge (z);
        }
        \path (c1) edge (x);
        \path (c2) edge (x);
        \path (c3) edge (y);
        \path (c4) edge (z);
        \path (c5) edge (z);
    \end{tikzpicture}
    \vspace{6.5pt}
    \caption{Instance and starting configuration.}
    \end{subfigure}
    \begin{subfigure}{0.425\columnwidth}
    \centering
    \setlength{\tabcolsep}{4.5pt}
    \begin{tabular}{lcr}
        \toprule
        Type&Move&Weight\\
        \midrule
        Miller&$x\to z$&1\\
        Baker&$y\to z$&8\\
        Baker&$x\to y$&5\\
        Miller&$x\to y$&1\\
        Baker&$z\to y$&6\\
        Miller&$x\to z$&1\\
        Miller&$x\to y$&1\\
        \bottomrule
    \end{tabular}
        
        
        
        
        
        
        
        
    \caption{Moves in the cycle.}
    \end{subfigure}
    \caption{An instance with weighted bakers and unweighted millers that admits an improving response cycle.
    The edges denote the starting state locations for the cycle.
    All agents may choose all locations.}
    \label{fig:cycle}
\end{figure}

Another possible research direction is to restrict the millers in their location choice as well. Here, we cannot rule out the existence of a potential function. However, settling this open problem seems to be challenging, since it also remains unsolved for the related model by~\citet{ijcai-23}.

Finally, as already mentioned above, other variants of Hedonic Games can also be extended by adding a layer of location choice. We believe that investigating the impact of restricting the set of possible coalitions in such a natural way will lead to interesting new insights into coalition formation problems. Using locations, effects like geographical proximity, access restrictions, and space restrictions can be combined in simple and clean models.

\begin{acks}
This work was supported by the Federal Ministry of Education and Research (BMBF) with a fellowship within the IFI program of the German Academic Exchange Service (DAAD).
\end{acks}

\balance
\bibliographystyle{ACM-Reference-Format}
\bibliography{simultaneous}


\begin{thebibliography}{47}


\ifx \showCODEN    \undefined \def \showCODEN     #1{\unskip}     \fi
\ifx \showDOI      \undefined \def \showDOI       #1{#1}\fi
\ifx \showISBNx    \undefined \def \showISBNx     #1{\unskip}     \fi
\ifx \showISBNxiii \undefined \def \showISBNxiii  #1{\unskip}     \fi
\ifx \showISSN     \undefined \def \showISSN      #1{\unskip}     \fi
\ifx \showLCCN     \undefined \def \showLCCN      #1{\unskip}     \fi
\ifx \shownote     \undefined \def \shownote      #1{#1}          \fi
\ifx \showarticletitle \undefined \def \showarticletitle #1{#1}   \fi
\ifx \showURL      \undefined \def \showURL       {\relax}        \fi
\providecommand\bibfield[2]{#2}
\providecommand\bibinfo[2]{#2}
\providecommand\natexlab[1]{#1}
\providecommand\showeprint[2][]{arXiv:#2}

\bibitem[\protect\citeauthoryear{Agarwal, Elkind, Gan, Igarashi, Suksompong,
  and Voudouris}{Agarwal et~al\mbox{.}}{2021}]%
        {AgarwalEGISV21}
\bibfield{author}{\bibinfo{person}{Aishwarya Agarwal}, \bibinfo{person}{Edith
  Elkind}, \bibinfo{person}{Jiarui Gan}, \bibinfo{person}{Ayumi Igarashi},
  \bibinfo{person}{Warut Suksompong}, {and} \bibinfo{person}{Alexandros~A.
  Voudouris}.} \bibinfo{year}{2021}\natexlab{}.
\newblock \showarticletitle{Schelling games on graphs}.
\newblock \bibinfo{journal}{\emph{Artificial Intelligence}}
  \bibinfo{volume}{301}, \bibinfo{number}{103576} (\bibinfo{year}{2021}).
\newblock
\urldef\tempurl%
\url{https://doi.org/10.1016/J.ARTINT.2021.103576}
\showDOI{\tempurl}


\bibitem[\protect\citeauthoryear{{Anshelevich}, {Dasgupta}, {Kleinberg},
  {Tardos}, {Wexler}, and {Roughgarden}}{{Anshelevich} et~al\mbox{.}}{2004}]%
        {pos}
\bibfield{author}{\bibinfo{person}{Elliot {Anshelevich}},
  \bibinfo{person}{Anirban {Dasgupta}}, \bibinfo{person}{Jon {Kleinberg}},
  \bibinfo{person}{{\'E}va {Tardos}}, \bibinfo{person}{Tom {Wexler}}, {and}
  \bibinfo{person}{Tim {Roughgarden}}.} \bibinfo{year}{2004}\natexlab{}.
\newblock \showarticletitle{{The Price of Stability for Network Design with
  Fair Cost Allocation}}. In \bibinfo{booktitle}{\emph{{FOCS}}}.
  \bibinfo{pages}{295--304}.
\newblock
\urldef\tempurl%
\url{https://doi.org/10.1109/FOCS.2004.68}
\showDOI{\tempurl}


\bibitem[\protect\citeauthoryear{Aziz, Brandl, Brandt, Harrenstein, Olsen, and
  Peters}{Aziz et~al\mbox{.}}{2019}]%
        {bakermiller}
\bibfield{author}{\bibinfo{person}{Haris Aziz}, \bibinfo{person}{Florian
  Brandl}, \bibinfo{person}{Felix Brandt}, \bibinfo{person}{Paul Harrenstein},
  \bibinfo{person}{Martin Olsen}, {and} \bibinfo{person}{Dominik Peters}.}
  \bibinfo{year}{2019}\natexlab{}.
\newblock \showarticletitle{Fractional Hedonic Games}.
\newblock \bibinfo{journal}{\emph{Transactions on Economics and Computation}}
  \bibinfo{volume}{7}, \bibinfo{number}{2}, Article \bibinfo{articleno}{6}
  (\bibinfo{year}{2019}), \bibinfo{numpages}{29}~pages.
\newblock
\urldef\tempurl%
\url{https://doi.org/10.1145/3327970}
\showDOI{\tempurl}


\bibitem[\protect\citeauthoryear{Aziz, Brandt, and Harrenstein}{Aziz
  et~al\mbox{.}}{2014}]%
        {bakermiller_conf}
\bibfield{author}{\bibinfo{person}{Haris Aziz}, \bibinfo{person}{Felix Brandt},
  {and} \bibinfo{person}{Paul Harrenstein}.} \bibinfo{year}{2014}\natexlab{}.
\newblock \showarticletitle{Fractional hedonic games}. In
  \bibinfo{booktitle}{\emph{{AAMAS}}}. \bibinfo{pages}{5--12}.
\newblock
\urldef\tempurl%
\url{https://dl.acm.org/doi/10.5555/2615731.2615736}
\showURL{%
\tempurl}


\bibitem[\protect\citeauthoryear{Bil{\`{o}}, Bil{\`{o}}, Lenzner, and
  Molitor}{Bil{\`{o}} et~al\mbox{.}}{2022}]%
        {BBLM22_single-peaked}
\bibfield{author}{\bibinfo{person}{Davide Bil{\`{o}}},
  \bibinfo{person}{Vittorio Bil{\`{o}}}, \bibinfo{person}{Pascal Lenzner},
  {and} \bibinfo{person}{Louise Molitor}.} \bibinfo{year}{2022}\natexlab{}.
\newblock \showarticletitle{Tolerance is Necessary for Stability: Single-Peaked
  Swap Schelling Games}. In \bibinfo{booktitle}{\emph{{IJCAI}}}.
  \bibinfo{pages}{81--87}.
\newblock
\urldef\tempurl%
\url{https://doi.org/10.24963/ijcai.2022/12}
\showDOI{\tempurl}


\bibitem[\protect\citeauthoryear{Bil{\`o}, Fanelli, Flammini, Monaco, and
  Moscardelli}{Bil{\`o} et~al\mbox{.}}{2018}]%
        {bilo2018nash}
\bibfield{author}{\bibinfo{person}{Vittorio Bil{\`o}}, \bibinfo{person}{Angelo
  Fanelli}, \bibinfo{person}{Michele Flammini}, \bibinfo{person}{Gianpiero
  Monaco}, {and} \bibinfo{person}{Luca Moscardelli}.}
  \bibinfo{year}{2018}\natexlab{}.
\newblock \showarticletitle{Nash stable outcomes in fractional hedonic games:
  Existence, efficiency and computation}.
\newblock \bibinfo{journal}{\emph{Journal of Artificial Intelligence Research}}
   \bibinfo{volume}{62} (\bibinfo{year}{2018}), \bibinfo{pages}{315--371}.
\newblock
\urldef\tempurl%
\url{https://doi.org/10.1613/jair.1.11211}
\showDOI{\tempurl}


\bibitem[\protect\citeauthoryear{Black}{Black}{1948}]%
        {black1948rationale}
\bibfield{author}{\bibinfo{person}{Duncan Black}.}
  \bibinfo{year}{1948}\natexlab{}.
\newblock \showarticletitle{On the rationale of group decision-making}.
\newblock \bibinfo{journal}{\emph{Journal of political economy}}
  \bibinfo{volume}{56}, \bibinfo{number}{1} (\bibinfo{year}{1948}),
  \bibinfo{pages}{23--34}.
\newblock
\urldef\tempurl%
\url{https://www.jstor.org/stable/1825026}
\showURL{%
\tempurl}


\bibitem[\protect\citeauthoryear{Blundell}{Blundell}{1988}]%
        {blundell1988consumer}
\bibfield{author}{\bibinfo{person}{Richard Blundell}.}
  \bibinfo{year}{1988}\natexlab{}.
\newblock \showarticletitle{Consumer behaviour: theory and empirical
  evidence—a survey}.
\newblock \bibinfo{journal}{\emph{The Economic Journal}} \bibinfo{volume}{98},
  \bibinfo{number}{389} (\bibinfo{year}{1988}), \bibinfo{pages}{16--65}.
\newblock
\urldef\tempurl%
\url{https://doi.org/10.2307/2233510}
\showDOI{\tempurl}


\bibitem[\protect\citeauthoryear{Boehmer and Elkind}{Boehmer and
  Elkind}{2020}]%
        {BoehmerE20}
\bibfield{author}{\bibinfo{person}{Niclas Boehmer} {and} \bibinfo{person}{Edith
  Elkind}.} \bibinfo{year}{2020}\natexlab{}.
\newblock \showarticletitle{Individual-Based Stability in Hedonic Diversity
  Games}. In \bibinfo{booktitle}{\emph{AAAI}}. \bibinfo{pages}{1822--1829}.
\newblock
\urldef\tempurl%
\url{https://doi.org/10.1609/aaai.V34I02.5549}
\showDOI{\tempurl}


\bibitem[\protect\citeauthoryear{Bogomolnaia and Jackson}{Bogomolnaia and
  Jackson}{2002}]%
        {BJ02}
\bibfield{author}{\bibinfo{person}{Anna Bogomolnaia} {and}
  \bibinfo{person}{Matthew~O. Jackson}.} \bibinfo{year}{2002}\natexlab{}.
\newblock \showarticletitle{The stability of hedonic coalition structures}.
\newblock \bibinfo{journal}{\emph{Games and Economic Behavior}}
  \bibinfo{volume}{38}, \bibinfo{number}{2} (\bibinfo{year}{2002}),
  \bibinfo{pages}{201--230}.
\newblock
\urldef\tempurl%
\url{https://doi.org/10.1006/game.2001.0877}
\showDOI{\tempurl}


\bibitem[\protect\citeauthoryear{Brandl, Brandt, and Strobel}{Brandl
  et~al\mbox{.}}{2015}]%
        {fhg-individual}
\bibfield{author}{\bibinfo{person}{Florian Brandl}, \bibinfo{person}{Felix
  Brandt}, {and} \bibinfo{person}{Martin Strobel}.}
  \bibinfo{year}{2015}\natexlab{}.
\newblock \showarticletitle{Fractional Hedonic Games: Individual and Group
  Stability}. In \bibinfo{booktitle}{\emph{AAMAS}}.
  \bibinfo{pages}{1219–1227}.
\newblock
\urldef\tempurl%
\url{https://dl.acm.org/doi/10.5555/2772879.2773307}
\showURL{%
\tempurl}


\bibitem[\protect\citeauthoryear{Bredereck, Elkind, and Igarashi}{Bredereck
  et~al\mbox{.}}{2019}]%
        {BredereckEI19}
\bibfield{author}{\bibinfo{person}{Robert Bredereck}, \bibinfo{person}{Edith
  Elkind}, {and} \bibinfo{person}{Ayumi Igarashi}.}
  \bibinfo{year}{2019}\natexlab{}.
\newblock \showarticletitle{Hedonic Diversity Games}. In
  \bibinfo{booktitle}{\emph{AAMAS}}. \bibinfo{pages}{565--573}.
\newblock
\urldef\tempurl%
\url{https://dl.acm.org/doi/10.5555/3306127.3331741}
\showURL{%
\tempurl}


\bibitem[\protect\citeauthoryear{Carosi, Monaco, and Moscardelli}{Carosi
  et~al\mbox{.}}{2019}]%
        {carosi2019}
\bibfield{author}{\bibinfo{person}{Raffaello Carosi},
  \bibinfo{person}{Gianpiero Monaco}, {and} \bibinfo{person}{Luca
  Moscardelli}.} \bibinfo{year}{2019}\natexlab{}.
\newblock \showarticletitle{Local Core Stability in Simple Symmetric Fractional
  Hedonic Games}. In \bibinfo{booktitle}{\emph{AAMAS}}.
  \bibinfo{pages}{574--582}.
\newblock
\urldef\tempurl%
\url{https://dl.acm.org/doi/10.5555/3306127.3331742}
\showURL{%
\tempurl}


\bibitem[\protect\citeauthoryear{Chauhan, Lenzner, and Molitor}{Chauhan
  et~al\mbox{.}}{2018}]%
        {ChauhanLM18}
\bibfield{author}{\bibinfo{person}{Ankit Chauhan}, \bibinfo{person}{Pascal
  Lenzner}, {and} \bibinfo{person}{Louise Molitor}.}
  \bibinfo{year}{2018}\natexlab{}.
\newblock \showarticletitle{Schelling Segregation with Strategic Agents}. In
  \bibinfo{booktitle}{\emph{SAGT}}, Vol.~\bibinfo{volume}{11059}.
  \bibinfo{pages}{137--149}.
\newblock
\urldef\tempurl%
\url{https://doi.org/10.1007/978-3-319-99660-8_13}
\showDOI{\tempurl}


\bibitem[\protect\citeauthoryear{Downs}{Downs}{1957}]%
        {downs}
\bibfield{author}{\bibinfo{person}{Anthony Downs}.}
  \bibinfo{year}{1957}\natexlab{}.
\newblock \showarticletitle{{An Economic Theory of Political Action in a
  Democracy}}.
\newblock \bibinfo{journal}{\emph{Journal of Political Economy}}
  \bibinfo{volume}{65}, \bibinfo{number}{2} (\bibinfo{year}{1957}),
  \bibinfo{pages}{135--150}.
\newblock
\urldef\tempurl%
\url{https://www.jstor.org/stable/1827369}
\showURL{%
\tempurl}


\bibitem[\protect\citeauthoryear{Dreze and Greenberg}{Dreze and
  Greenberg}{1980}]%
        {DG80}
\bibfield{author}{\bibinfo{person}{Jacques~H. Dreze} {and}
  \bibinfo{person}{Joseph Greenberg}.} \bibinfo{year}{1980}\natexlab{}.
\newblock \showarticletitle{Hedonic coalitions: Optimality and stability}.
\newblock \bibinfo{journal}{\emph{Econometrica}} (\bibinfo{year}{1980}),
  \bibinfo{pages}{987--1003}.
\newblock
\urldef\tempurl%
\url{https://doi.org/10.2307/1912943}
\showDOI{\tempurl}


\bibitem[\protect\citeauthoryear{D{\"{u}}rr and Nguyen}{D{\"{u}}rr and
  Nguyen}{2007}]%
        {DurrT07}
\bibfield{author}{\bibinfo{person}{Christoph D{\"{u}}rr} {and}
  \bibinfo{person}{Kim~Thang Nguyen}.} \bibinfo{year}{2007}\natexlab{}.
\newblock \showarticletitle{Nash Equilibria in Voronoi Games on Graphs}. In
  \bibinfo{booktitle}{\emph{ESA}}. \bibinfo{pages}{17--28}.
\newblock
\urldef\tempurl%
\url{https://doi.org/10.1007/978-3-540-75520-3_4}
\showDOI{\tempurl}


\bibitem[\protect\citeauthoryear{Echzell, Friedrich, Lenzner, Molitor, Pappik,
  Sch{\"{o}}ne, Sommer, and Stangl}{Echzell et~al\mbox{.}}{2019}]%
        {Echzell0LMPSSS19}
\bibfield{author}{\bibinfo{person}{Hagen Echzell}, \bibinfo{person}{Tobias
  Friedrich}, \bibinfo{person}{Pascal Lenzner}, \bibinfo{person}{Louise
  Molitor}, \bibinfo{person}{Marcus Pappik}, \bibinfo{person}{Friedrich
  Sch{\"{o}}ne}, \bibinfo{person}{Fabian Sommer}, {and} \bibinfo{person}{David
  Stangl}.} \bibinfo{year}{2019}\natexlab{}.
\newblock \showarticletitle{Convergence and Hardness of Strategic Schelling
  Segregation}. In \bibinfo{booktitle}{\emph{WINE}}. \bibinfo{pages}{156--170}.
\newblock
\urldef\tempurl%
\url{https://doi.org/10.1007/978-3-030-35389-6_12}
\showDOI{\tempurl}


\bibitem[\protect\citeauthoryear{Eiselt, Laporte, and Thisse}{Eiselt
  et~al\mbox{.}}{1993}]%
        {ELT93}
\bibfield{author}{\bibinfo{person}{Horst~A. Eiselt}, \bibinfo{person}{Gilbert
  Laporte}, {and} \bibinfo{person}{Jaques-Francois Thisse}.}
  \bibinfo{year}{1993}\natexlab{}.
\newblock \showarticletitle{{Competitive Location Models: {A} Framework and
  Bibliography}}.
\newblock \bibinfo{journal}{\emph{Transportation Science}}
  \bibinfo{volume}{27}, \bibinfo{number}{1} (\bibinfo{year}{1993}),
  \bibinfo{pages}{44--54}.
\newblock
\urldef\tempurl%
\url{https://doi.org/10.1287/trsc.27.1.44}
\showDOI{\tempurl}


\bibitem[\protect\citeauthoryear{Fabrikant, Papadimitriou, and
  Talwar}{Fabrikant et~al\mbox{.}}{2004}]%
        {potential-min-algo}
\bibfield{author}{\bibinfo{person}{Alex Fabrikant}, \bibinfo{person}{Christos
  Papadimitriou}, {and} \bibinfo{person}{Kunal Talwar}.}
  \bibinfo{year}{2004}\natexlab{}.
\newblock \showarticletitle{The complexity of pure Nash equilibria}. In
  \bibinfo{booktitle}{\emph{STOC}}. \bibinfo{pages}{604--612}.
\newblock
\urldef\tempurl%
\url{https://doi.org/10.1145/1007352.1007445}
\showDOI{\tempurl}


\bibitem[\protect\citeauthoryear{Feige}{Feige}{1998}]%
        {setcover-inapprox}
\bibfield{author}{\bibinfo{person}{Uriel Feige}.}
  \bibinfo{year}{1998}\natexlab{}.
\newblock \showarticletitle{A threshold of ln n for approximating set cover}.
\newblock \bibinfo{journal}{\emph{J. ACM}} \bibinfo{volume}{45},
  \bibinfo{number}{4} (\bibinfo{year}{1998}), \bibinfo{pages}{634--652}.
\newblock
\urldef\tempurl%
\url{https://doi.org/10.1145/285055.285059}
\showDOI{\tempurl}


\bibitem[\protect\citeauthoryear{Feldmann, Mavronicolas, and Monien}{Feldmann
  et~al\mbox{.}}{2009}]%
        {FeldmannMM09}
\bibfield{author}{\bibinfo{person}{Rainer Feldmann}, \bibinfo{person}{Marios
  Mavronicolas}, {and} \bibinfo{person}{Burkhard Monien}.}
  \bibinfo{year}{2009}\natexlab{}.
\newblock \showarticletitle{Nash Equilibria for Voronoi Games on Transitive
  Graphs}. In \bibinfo{booktitle}{\emph{WINE}}. \bibinfo{pages}{280--291}.
\newblock
\urldef\tempurl%
\url{https://doi.org/10.1007/978-3-642-10841-9_26}
\showDOI{\tempurl}


\bibitem[\protect\citeauthoryear{Feldotto, Lenzner, Molitor, and
  Skopalik}{Feldotto et~al\mbox{.}}{2019}]%
        {FeldottoLMS19}
\bibfield{author}{\bibinfo{person}{Matthias Feldotto}, \bibinfo{person}{Pascal
  Lenzner}, \bibinfo{person}{Louise Molitor}, {and} \bibinfo{person}{Alexander
  Skopalik}.} \bibinfo{year}{2019}\natexlab{}.
\newblock \showarticletitle{From Hotelling to Load Balancing: Approximation and
  the Principle of Minimum Differentiation}. In
  \bibinfo{booktitle}{\emph{AAMAS}}. \bibinfo{pages}{1949--1951}.
\newblock
\urldef\tempurl%
\url{https://dl.acm.org/doi/10.5555/3306127.3331973}
\showURL{%
\tempurl}


\bibitem[\protect\citeauthoryear{Fournier and Scarsini}{Fournier and
  Scarsini}{2019}]%
        {FournierS19}
\bibfield{author}{\bibinfo{person}{Ga{\"{e}}tan Fournier} {and}
  \bibinfo{person}{Marco Scarsini}.} \bibinfo{year}{2019}\natexlab{}.
\newblock \showarticletitle{Location Games on Networks: Existence and
  Efficiency of Equilibria}.
\newblock \bibinfo{journal}{\emph{Mathematics of Operations Research}}
  \bibinfo{volume}{44}, \bibinfo{number}{1} (\bibinfo{year}{2019}),
  \bibinfo{pages}{212--235}.
\newblock
\urldef\tempurl%
\url{https://doi.org/10.1287/MOOR.2017.0921}
\showDOI{\tempurl}


\bibitem[\protect\citeauthoryear{Friedrich, Lenzner, Molitor, and
  Seifert}{Friedrich et~al\mbox{.}}{2023}]%
        {FLMS23_single-peaked}
\bibfield{author}{\bibinfo{person}{Tobias Friedrich}, \bibinfo{person}{Pascal
  Lenzner}, \bibinfo{person}{Louise Molitor}, {and} \bibinfo{person}{Lars
  Seifert}.} \bibinfo{year}{2023}\natexlab{}.
\newblock \showarticletitle{Single-Peaked Jump Schelling Games}. In
  \bibinfo{booktitle}{\emph{{AAMAS}}}. \bibinfo{pages}{2899--2901}.
\newblock
\urldef\tempurl%
\url{https://dl.acm.org/doi/10.5555/3545946.3599116}
\showURL{%
\tempurl}


\bibitem[\protect\citeauthoryear{{Gadea Harder}, Krogmann, Lenzner, and
  Skopalik}{{Gadea Harder} et~al\mbox{.}}{2023}]%
        {ijcai-23}
\bibfield{author}{\bibinfo{person}{Jonathan {Gadea Harder}},
  \bibinfo{person}{Simon Krogmann}, \bibinfo{person}{Pascal Lenzner}, {and}
  \bibinfo{person}{Alexander Skopalik}.} \bibinfo{year}{2023}\natexlab{}.
\newblock \showarticletitle{Strategic Resource Selection with Homophilic
  Agents}. In \bibinfo{booktitle}{\emph{{IJCAI}}}.
\newblock
\urldef\tempurl%
\url{https://doi.org/10.24963/ijcai.2023/301}
\showDOI{\tempurl}


\bibitem[\protect\citeauthoryear{Goemans, Li, Mirrokni, and Thottan}{Goemans
  et~al\mbox{.}}{2006}]%
        {GoemansLMT06}
\bibfield{author}{\bibinfo{person}{Michel~X. Goemans},
  \bibinfo{person}{Li~(Erran) Li}, \bibinfo{person}{Vahab~S. Mirrokni}, {and}
  \bibinfo{person}{Marina Thottan}.} \bibinfo{year}{2006}\natexlab{}.
\newblock \showarticletitle{Market sharing games applied to content
  distribution in ad hoc networks}.
\newblock \bibinfo{journal}{\emph{Journal on Selected Areas in Communications}}
  \bibinfo{volume}{24}, \bibinfo{number}{5} (\bibinfo{year}{2006}),
  \bibinfo{pages}{1020--1033}.
\newblock
\urldef\tempurl%
\url{https://doi.org/10.1109/JSAC.2006.872884}
\showDOI{\tempurl}


\bibitem[\protect\citeauthoryear{Hochbaum}{Hochbaum}{1996}]%
        {maxcov}
\bibfield{author}{\bibinfo{person}{Dorit~S. Hochbaum}.}
  \bibinfo{year}{1996}\natexlab{}.
\newblock \bibinfo{booktitle}{\emph{Approximating covering and packing
  problems: set cover, vertex cover, independent set, and related problems}}.
\newblock \bibinfo{publisher}{PWS Publishing Co.}, \bibinfo{address}{USA},
  \bibinfo{pages}{94--143}.
\newblock
\showISBNx{0534949681}


\bibitem[\protect\citeauthoryear{Hochbaum and Pathria}{Hochbaum and
  Pathria}{1998}]%
        {maxcov-greedy}
\bibfield{author}{\bibinfo{person}{Dorit~S. Hochbaum} {and}
  \bibinfo{person}{Anu Pathria}.} \bibinfo{year}{1998}\natexlab{}.
\newblock \showarticletitle{Analysis of the greedy approach in problems of
  maximum k-coverage}.
\newblock \bibinfo{journal}{\emph{Naval Research Logistics (NRL)}}
  \bibinfo{volume}{45}, \bibinfo{number}{6} (\bibinfo{year}{1998}),
  \bibinfo{pages}{615--627}.
\newblock
\urldef\tempurl%
\url{https://doi.org/10.1002/(SICI)1520-6750(199809)45:6<615::AID-NAV5>3.0.CO;2-5}
\showDOI{\tempurl}


\bibitem[\protect\citeauthoryear{Hotelling}{Hotelling}{1929}]%
        {hotelling}
\bibfield{author}{\bibinfo{person}{Harold Hotelling}.}
  \bibinfo{year}{1929}\natexlab{}.
\newblock \showarticletitle{{Stability in Competition}}.
\newblock \bibinfo{journal}{\emph{The Economic Journal}} \bibinfo{volume}{39},
  \bibinfo{number}{153} (\bibinfo{year}{1929}), \bibinfo{pages}{41--57}.
\newblock
\urldef\tempurl%
\url{https://doi.org/10.2307/2224214}
\showDOI{\tempurl}


\bibitem[\protect\citeauthoryear{Kohlberg}{Kohlberg}{1983}]%
        {kohlberg1983equilibrium}
\bibfield{author}{\bibinfo{person}{Elon Kohlberg}.}
  \bibinfo{year}{1983}\natexlab{}.
\newblock \showarticletitle{Equilibrium store locations when consumers minimize
  travel time plus waiting time}.
\newblock \bibinfo{journal}{\emph{Economics Letters}} \bibinfo{volume}{11},
  \bibinfo{number}{3} (\bibinfo{year}{1983}), \bibinfo{pages}{211--216}.
\newblock
\urldef\tempurl%
\url{https://doi.org/10.1016/0165-1765(83)90137-4}
\showDOI{\tempurl}


\bibitem[\protect\citeauthoryear{Koutsoupias and Papadimitriou}{Koutsoupias and
  Papadimitriou}{1999}]%
        {poa}
\bibfield{author}{\bibinfo{person}{Elias Koutsoupias} {and}
  \bibinfo{person}{Christos Papadimitriou}.} \bibinfo{year}{1999}\natexlab{}.
\newblock \showarticletitle{{Worst-Case Equilibria}}. In
  \bibinfo{booktitle}{\emph{STACS}}. \bibinfo{pages}{404--413}.
\newblock
\urldef\tempurl%
\url{https://doi.org/10.1007/3-540-49116-3_38}
\showDOI{\tempurl}


\bibitem[\protect\citeauthoryear{Krogmann, Lenzner, Molitor, and
  Skopalik}{Krogmann et~al\mbox{.}}{2021}]%
        {KrogmannLMS21}
\bibfield{author}{\bibinfo{person}{Simon Krogmann}, \bibinfo{person}{Pascal
  Lenzner}, \bibinfo{person}{Louise Molitor}, {and} \bibinfo{person}{Alexander
  Skopalik}.} \bibinfo{year}{2021}\natexlab{}.
\newblock \showarticletitle{Two-Stage Facility Location Games with Strategic
  Clients and Facilities}. In \bibinfo{booktitle}{\emph{IJCAI}}.
  \bibinfo{pages}{292--298}.
\newblock
\urldef\tempurl%
\url{https://doi.org/10.24963/ijcai.2021/41}
\showDOI{\tempurl}


\bibitem[\protect\citeauthoryear{Krogmann, Lenzner, and Skopalik}{Krogmann
  et~al\mbox{.}}{2023}]%
        {KrogmannLS23}
\bibfield{author}{\bibinfo{person}{Simon Krogmann}, \bibinfo{person}{Pascal
  Lenzner}, {and} \bibinfo{person}{Alexander Skopalik}.}
  \bibinfo{year}{2023}\natexlab{}.
\newblock \showarticletitle{Strategic Facility Location with Clients That
  Minimize Total Waiting Time}. In \bibinfo{booktitle}{\emph{AAAI}}.
  \bibinfo{pages}{5714--5721}.
\newblock
\urldef\tempurl%
\url{https://doi.org/10.1609/aaai.V37I5.25709}
\showDOI{\tempurl}


\bibitem[\protect\citeauthoryear{Krogmann, Lenzner, Skopalik, Uetz, and
  Vos}{Krogmann et~al\mbox{.}}{2024}]%
        {Krogmann24}
\bibfield{author}{\bibinfo{person}{Simon Krogmann}, \bibinfo{person}{Pascal
  Lenzner}, \bibinfo{person}{Alexander Skopalik}, \bibinfo{person}{Marc Uetz},
  {and} \bibinfo{person}{Marnix~C. Vos}.} \bibinfo{year}{2024}\natexlab{}.
\newblock \showarticletitle{Equilibria in Two-Stage Facility Location with
  Atomic Clients}. In \bibinfo{booktitle}{\emph{IJCAI}}.
  \bibinfo{pages}{2842--2850}.
\newblock
\urldef\tempurl%
\url{https://doi.org/10.24963/ijcai.2024/315}
\showDOI{\tempurl}


\bibitem[\protect\citeauthoryear{Monaco and Moscardelli}{Monaco and
  Moscardelli}{2023}]%
        {DBLP:conf/wine/MonacoM23}
\bibfield{author}{\bibinfo{person}{Gianpiero Monaco} {and}
  \bibinfo{person}{Luca Moscardelli}.} \bibinfo{year}{2023}\natexlab{}.
\newblock \showarticletitle{Nash Stability in Fractional Hedonic Games with
  Bounded Size Coalitions}. In \bibinfo{booktitle}{\emph{WINE}},
  Vol.~\bibinfo{volume}{14413}. \bibinfo{pages}{509--526}.
\newblock
\urldef\tempurl%
\url{https://doi.org/10.1007/978-3-031-48974-7_29}
\showDOI{\tempurl}


\bibitem[\protect\citeauthoryear{Monaco, Moscardelli, and Velaj}{Monaco
  et~al\mbox{.}}{2019}]%
        {monaco2019}
\bibfield{author}{\bibinfo{person}{Gianpiero Monaco}, \bibinfo{person}{Luca
  Moscardelli}, {and} \bibinfo{person}{Yllka Velaj}.}
  \bibinfo{year}{2019}\natexlab{}.
\newblock \showarticletitle{On the Performance of Stable Outcomes in Modified
  Fractional Hedonic Games with Egalitarian Social Welfare}. In
  \bibinfo{booktitle}{\emph{AAMAS}}. \bibinfo{pages}{873--881}.
\newblock
\urldef\tempurl%
\url{https://dl.acm.org/doi/10.5555/3306127.3331780}
\showURL{%
\tempurl}


\bibitem[\protect\citeauthoryear{Monaco, Moscardelli, and Velaj}{Monaco
  et~al\mbox{.}}{2020}]%
        {DBLP:journals/aamas/MonacoMV20}
\bibfield{author}{\bibinfo{person}{Gianpiero Monaco}, \bibinfo{person}{Luca
  Moscardelli}, {and} \bibinfo{person}{Yllka Velaj}.}
  \bibinfo{year}{2020}\natexlab{}.
\newblock \showarticletitle{Stable outcomes in modified fractional hedonic
  games}.
\newblock \bibinfo{journal}{\emph{Autonomous Agents Multi Agent Systems}}
  \bibinfo{volume}{34}, \bibinfo{number}{4} (\bibinfo{year}{2020}).
\newblock
\urldef\tempurl%
\url{https://doi.org/10.1007/S10458-019-09431-Z}
\showDOI{\tempurl}


\bibitem[\protect\citeauthoryear{Monderer and Shapley}{Monderer and
  Shapley}{1996}]%
        {MS96}
\bibfield{author}{\bibinfo{person}{Dov Monderer} {and}
  \bibinfo{person}{Lloyd~S. Shapley}.} \bibinfo{year}{1996}\natexlab{}.
\newblock \showarticletitle{{Potential Games}}.
\newblock \bibinfo{journal}{\emph{Games and Economic Behavior}}
  \bibinfo{volume}{14}, \bibinfo{number}{1} (\bibinfo{year}{1996}),
  \bibinfo{pages}{124--143}.
\newblock
\urldef\tempurl%
\url{https://doi.org/10.1006/game.1996.0044}
\showDOI{\tempurl}


\bibitem[\protect\citeauthoryear{Orlin}{Orlin}{1993}]%
        {orlin}
\bibfield{author}{\bibinfo{person}{James~B. Orlin}.}
  \bibinfo{year}{1993}\natexlab{}.
\newblock \showarticletitle{A faster strongly polynomial minimum cost flow
  algorithm}.
\newblock \bibinfo{journal}{\emph{Operations Research}} \bibinfo{volume}{41},
  \bibinfo{number}{2} (\bibinfo{year}{1993}), \bibinfo{pages}{338--350}.
\newblock
\urldef\tempurl%
\url{https://doi.org/10.1287/opre.41.2.338}
\showDOI{\tempurl}


\bibitem[\protect\citeauthoryear{Owen and Daskin}{Owen and Daskin}{1998}]%
        {OwenD98}
\bibfield{author}{\bibinfo{person}{Susan~Hesse Owen} {and}
  \bibinfo{person}{Mark~S. Daskin}.} \bibinfo{year}{1998}\natexlab{}.
\newblock \showarticletitle{Strategic facility location: {A} review}.
\newblock \bibinfo{journal}{\emph{European Journal of Operational Research}}
  \bibinfo{volume}{111}, \bibinfo{number}{3} (\bibinfo{year}{1998}),
  \bibinfo{pages}{423--447}.
\newblock
\urldef\tempurl%
\url{https://doi.org/10.1016/S0377-2217(98)00186-6}
\showDOI{\tempurl}


\bibitem[\protect\citeauthoryear{Peters, Schr{\"{o}}der, and Vermeulen}{Peters
  et~al\mbox{.}}{2018}]%
        {PetersSV18}
\bibfield{author}{\bibinfo{person}{Hans Peters}, \bibinfo{person}{Marc
  Schr{\"{o}}der}, {and} \bibinfo{person}{Dries Vermeulen}.}
  \bibinfo{year}{2018}\natexlab{}.
\newblock \showarticletitle{Hotelling's location model with negative network
  externalities}.
\newblock \bibinfo{journal}{\emph{International Journal of Game Theory}}
  \bibinfo{volume}{47}, \bibinfo{number}{3} (\bibinfo{year}{2018}),
  \bibinfo{pages}{811--837}.
\newblock
\urldef\tempurl%
\url{https://doi.org/10.1007/S00182-018-0615-0}
\showDOI{\tempurl}


\bibitem[\protect\citeauthoryear{ReVelle and Eiselt}{ReVelle and
  Eiselt}{2005}]%
        {RE05}
\bibfield{author}{\bibinfo{person}{Charles~S. ReVelle} {and}
  \bibinfo{person}{Horst~A. Eiselt}.} \bibinfo{year}{2005}\natexlab{}.
\newblock \showarticletitle{{Location Analysis: {A} Synthesis and Survey}}.
\newblock \bibinfo{journal}{\emph{European Journal of Operational Research}}
  \bibinfo{volume}{165}, \bibinfo{number}{1} (\bibinfo{year}{2005}),
  \bibinfo{pages}{1--19}.
\newblock
\urldef\tempurl%
\url{https://doi.org/10.1016/j.ejor.2003.11.032}
\showDOI{\tempurl}


\bibitem[\protect\citeauthoryear{Rosenthal}{Rosenthal}{1973}]%
        {Rosenthal1973}
\bibfield{author}{\bibinfo{person}{Robert~W. Rosenthal}.}
  \bibinfo{year}{1973}\natexlab{}.
\newblock \showarticletitle{A class of games possessing pure-strategy Nash
  equilibria}.
\newblock \bibinfo{journal}{\emph{International Journal of Game Theory}}
  \bibinfo{volume}{2} (\bibinfo{year}{1973}), \bibinfo{pages}{65--67}.
\newblock
\urldef\tempurl%
\url{https://doi.org/10.1007/BF01737559}
\showDOI{\tempurl}


\bibitem[\protect\citeauthoryear{Schiffman, O'Cass, Paladino, and
  Carlson}{Schiffman et~al\mbox{.}}{2013}]%
        {schiffman2013consumer}
\bibfield{author}{\bibinfo{person}{Leon Schiffman}, \bibinfo{person}{Aron
  O'Cass}, \bibinfo{person}{Angela Paladino}, {and} \bibinfo{person}{Jamie
  Carlson}.} \bibinfo{year}{2013}\natexlab{}.
\newblock \bibinfo{booktitle}{\emph{Consumer behaviour}}.
\newblock \bibinfo{publisher}{Pearson Higher Education AU}.
\newblock
\showISBNx{9781442561533}


\bibitem[\protect\citeauthoryear{Schmand, Schr{\"{o}}der, and Skopalik}{Schmand
  et~al\mbox{.}}{2019}]%
        {Schmand0S19}
\bibfield{author}{\bibinfo{person}{Daniel Schmand}, \bibinfo{person}{Marc
  Schr{\"{o}}der}, {and} \bibinfo{person}{Alexander Skopalik}.}
  \bibinfo{year}{2019}\natexlab{}.
\newblock \showarticletitle{Network Investment Games with Wardrop Followers}.
  In \bibinfo{booktitle}{\emph{ICALP}}. \bibinfo{pages}{151:1--151:14}.
\newblock
\urldef\tempurl%
\url{https://doi.org/10.4230/LIPICS.ICALP.2019.151}
\showDOI{\tempurl}


\bibitem[\protect\citeauthoryear{Vetta}{Vetta}{2002}]%
        {Vetta02}
\bibfield{author}{\bibinfo{person}{Adrian Vetta}.}
  \bibinfo{year}{2002}\natexlab{}.
\newblock \showarticletitle{Nash Equilibria in Competitive Societies, with
  Applications to Facility Location, Traffic Routing and Auctions}. In
  \bibinfo{booktitle}{\emph{FOCS}}. \bibinfo{pages}{416--425}.
\newblock
\urldef\tempurl%
\url{https://doi.org/10.1109/SFCS.2002.1181966}
\showDOI{\tempurl}


\end{thebibliography}
\end{document}